\documentclass[11pt]{article}
\usepackage[a4paper,margin=1in]{geometry}
\usepackage{amsmath,amssymb,amsthm,mathtools,microtype,comment,authblk,enumitem}
\usepackage[T1]{fontenc}
\usepackage{lmodern}
\usepackage[hidelinks,hypertexnames=false]{hyperref}
\usepackage{needspace}
\usepackage{setspace}
\usepackage{thmtools}

\newcommand{\ACzerop}{{\mathrm{AC}^0[p]}}
\newcommand{\FPHP}{{\mathrm{FPHP}}}
\newcommand{\PHP}{{\mathrm{PHP}}}
\newcommand{\ontoPHP}{{\mathrm{onto}\text{-}\mathrm{PHP}}}
\newcommand{\bitPHP}{{\mathrm{bit}\text{-}\mathrm{PHP}}}

\usepackage{ifthen}
\newboolean{personalsite}
\setboolean{personalsite}{true}

\makeatletter
\renewcommand\thmt@autorefsetup{%
  \@xa\def\csname\thmt@envname autorefname\@xa\endcsname\@xa{\thmt@thmname}%
}
\makeatother

\newtheorem{theorem}{Theorem}[section]
\newtheorem{lemma}[theorem]{Lemma}
\newtheorem{fact}[theorem]{Fact}
\newtheorem{proposition}[theorem]{Proposition}
\newtheorem{corollary}[theorem]{Corollary}
\newtheorem{remark}[theorem]{Remark}
\theoremstyle{definition}\newtheorem{definition}[theorem]{Definition}
\newcommand{\Fp}{\mathbb F_p}
\newcommand{\PC}{\mathrm{PC}}
\newcommand{\Bool}{\mathrm{Bool}}
\newcommand{\ML}{\operatorname{ML}}
\newcommand{\Span}{\operatorname{span}}
\newcommand{\rank}{\operatorname{rank}}
\newcommand{\Q}{\mathcal Q}
\newcommand{\V}{\mathcal V}
\newcommand{\C}{\mathcal C}
\newcommand{\mult}{\operatorname{mult}}
\newcommand{\MOD}{\operatorname{MOD}}

\newcommand{\Res}{\operatorname{Res}}

\title{A Degree--Size Relation for Resolution over Polynomials}

\author{Shuo Pang}
\affil{University of Bristol, UK\\\texttt{s.pang@bristol.ac.uk}}
\date{}

\begin{document}
\maketitle

\vspace{10pt} 

\begin{abstract}
For every constant-width CNF, we show that linear degree in polynomial calculus (PC) implies exponential size in resolution over constant-degree polynomials, over the same prime field.

Applications include exponential lower bounds in $\Res(\PC_r/\Fp)$ and hence in $\Res(\oplus_p)$, separations between different moduli, improved lower bounds for $\operatorname{Res}(k)$ up to $k=\varepsilon\log n$, proof-search consequences, and an implication of superpolynomial $\ACzerop$-Frege bounds from very strong PC degree lower bounds.

The proof constructs a Razborov--Smolensky approximation and, inspired by \cite{Braun2026exponential}, a polynomial whose multiplication yields inference-preservation. We show their joint existence by measuring the approximation errors by the ranks of multiplication maps induced by the error-witness polynomials, modulo bounded-degree PC consequences.
\end{abstract}
\vspace{30pt} 

\ifthenelse{\boolean{personalsite}}{
\paragraph{AI Disclosure}
The author studied \cite{Braun2026exponential} by asking targeted questions with ChatGPT-6 Extra High. He then supplied a simplified proof architecture to ChatGPT-6 Pro, together with a suggestion to use a multiplier to restore inferences disrupted by Razborov--Smolensky.

The initial exchanges identified obstructions to certain approaches for characteristics other than $2$. After several more, the model proposed a lower-bound proof for $\bitPHP$ in $\Res(\oplus_p)$. The author subsequently suggested seeking a measure of the errors relative to the input formula. The model developed this into a rank measure modulo bounded-degree PC, and proposed essentially the arguments in Section~\ref{sec:rank}. 

Clearly the suggestions directed the search materially; only their necessity remains unclear. 
The author studied, anatomized and reconstructed the mathematics, then wrote the paper with AI assistance at that stage limited to editorial suggestions.
\vspace{30pt} 

\paragraph{Acknowledgements} 
I thank Jonas Conneryd and Yassine Ghannane who brought \cite{Braun2026exponential} to my attention, and Jonas Conneryd for suggesting the question in characteristic 0.
}
{}

\thispagestyle{empty} 
\clearpage           

\setcounter{page}{1}  

\section{Introduction and main result}
Let $p$ be a fixed prime. 
Let $F$ be a width-$w(F)$ unsatisfiable CNF on $n$ Boolean variables. 
Let 
\[\Q(F)\subseteq\Fp[x]=\Fp[x_1,\ldots,x_n]\] 
be the polynomial-system encoding of $F$, which collects all Boolean axioms $x_i^2-x_i$ and the \emph{falsity polynomials} $Q_J$ of each clause $J\in F$, i.e., $Q_{x_i}=1- x_i$, 
$Q_{\neg x_i} = x_i$, and  
\[Q_J = \prod_{j=1}^m Q_{l_j}\quad \text{for $J=\bigvee_{j=1}^m l_j$}.\] 
So the common zeros of $\Q(F)$ are exactly the satisfying assignments of $F$. 

Write $\Q\vdash_D f$ if $f$ has a degree-$D$ \emph{polynomial calculus} (PC) derivation from $\Q$ \cite{CEI96}: a sequence of polynomials of degree at most $D$, ending with $f$, where each line is either in $\Q$ or obtained from earlier lines by a linear combination $\alpha g+\beta h$ ($\alpha,\beta\in\Fp$) or by multiplication $x_i g$ with a variable. 
Let
\begin{equation}\label{eq:CD}
 \C_D(\Q)=\{f:\Q\vdash_D f\},
\end{equation}
which is a linear subspace of $\Fp[x]_{\le D}$.

The proof system $\Res(\PC_r/\Fp)$ was introduced by Kraj\'i\v{c}ek over $\mathbb F_2$ and subsequently studied over general finite rings by Khaniki~\cite{Krajicek18,Khaniki}. Its lines are disjunctions of degree-$\leq r$ polynomial equations. 
We use the standard (syntactic) deduction rules: 
\emph{syntactic weakening}, 
\emph{contraction}, 
\emph{deletion of false constant equations}, 
\emph{linear combination}
\[
 \frac{A\vee(f=0)\qquad B\vee(g=0)}
      {A\vee B\vee(\alpha f+\beta g=0)},
\]
and 
\emph{multiplication} $A\vee(f=0)\mapsto A\vee(fg=0)$ where all displayed polynomials have degrees $\le r$. 
The axioms $(0=0)$ and $(x_i=0)\vee(x_i-1=0)$ are supplied by default.
A refutation is a finite DAG ending with the empty clause. Its \emph{size} $S$ is the number of nodes in the DAG. 

\begin{restatable}[Degree--size]{theorem}{quantitative}\label{thm:quantitative}
Suppose $\Q(F)\not\vdash_D 1$, where
\[
 D\ge\max\{32(p-1)^2r,\,8w(F)\}.
\]
Then every $S$-node $\Res(\PC_r/\Fp)$ refutation of $F$ satisfies
\begin{equation}\label{eq:main}
 \ln S\ge\frac{D}{32p^2r}\left(\frac{D}{16n}\right)^{(p-1)r}.
\end{equation}
\end{restatable}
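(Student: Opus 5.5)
We argue by contradiction: given a small refutation, we build a function that assigns a ``meaning'' to every line and show the empty clause would then be derivable in degree-$D$ PC, contradicting $\Q(F)\not\vdash_D 1$. The tool is a Razborov--Smolensky style approximation taken modulo the subspace $\C_D(\Q(F))$.

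For each line $L=\bigvee_{i\le m}(f_i=0)$ of the refutation, consider the ``falsity'' polynomial
\[
 P_L=\prod_{i\le m} f_i^{\,p-1}.
\]
Over Boolean points, $P_L$ is nonzero exactly when $L$ is false. Its degree may be as large as $m(p-1)r$, so we replace it by a random approximation. Pick $t$ random $\Fp$-linear combinations $g_1,\dots,g_t$ of the $f_i$, and set
\[
 \tilde P_L=\prod_{k\le t} g_k^{\,p-1},
\]
which has degree at most $t(p-1)r$. The condition $t(p-1)r\le D/2$, up to constants, is what produces the requirement $D\ge 32(p-1)^2r$.

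We call a line \emph{good} if every inference step into it is preserved modulo $\C_D$. Concretely, we need a fixed multiplier polynomial $M$ (the idea from \cite{Braun2026exponential}) such that, whenever $L$ is inferred from $L_1,L_2$,
\[
 M\cdot \tilde P_L\in \C_D+\Span\{M\tilde P_{L_1},\,M\tilde P_{L_2}\}
 \quad\text{modulo degree-$D$ consequences}.
\]
For the axioms we require $M\tilde P_{\text{axiom}}\in \C_D$. For the empty clause, $\tilde P=1$, so $M\in\C_D$. Finally, $M$ must be chosen so that this forces $1\in\C_D$. One way is to multiply by a suitable product of the $Q_J$ complements; another is to insist that $M$ is invertible modulo $\C_D$ in a controlled sense.

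Following the abstract, we measure the error of each approximation as follows. For a line, the approximation is wrong exactly at points where all $g_k$ vanish but some $f_i$ does not. These error-witness polynomials $h$ (products of the form $f_i^{p-1}\prod_k(1-g_k^{p-1})$) act by multiplication. We take the rank of the map $u\mapsto hu$ on $\Fp[x]_{\le D'}/\C_{D'}$ as the error measure.

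The key lemma is that for a fixed low-degree space, a random choice of the $g_k$ drives this rank down geometrically in $t$. The intended mechanism is that each fresh random combination kills a $(1-1/p)$ fraction of the surviving rank in expectation, via a Schwartz--Zippel-type count over quotient spaces. A union bound over the $S$ lines then shows that some single choice makes all errors vanish on a subspace $W$ of codimension about $S\cdot\text{(rank)}$. The multiplier $M$ is then chosen inside the annihilator of all error witnesses. Such a nonzero $M$, not lying in $\C_D$, exists provided
\[
 \dim(\Fp[x]_{\le D/2}/\C) > S\cdot e^{-\Omega(t)}\cdot\dim(\cdot).
\]
This is where the $(D/16n)^{(p-1)r}$ factor arises: it comes from comparing dimensions of degree-$D$ versus degree-$(D-(p-1)r)$ multilinear spaces, since $\binom{n}{D}/\binom{n}{D-k}\approx (D/n)^k$. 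We use $w(F)\le D/8$ to ensure the axioms themselves are handled in degree $D$.

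We expect the main obstacle to be the joint existence of the multiplier $M$ and the approximations. Specifically, we must show that multiplication by $M$ converts ``rank-small'' errors into exact membership in $\C_D$, while $M$ itself remains outside $\C_D$. Our first attempt will be a counting argument: the good $M$ form the common kernel of $S$ multiplication maps, each of small rank, inside a quotient of dimension at least $\binom{n}{\le D/2}$ minus the dimension of $\C$. This kernel is nonzero when $\ln S$ is below the stated bound. Setting $t\approx D/(16p^2r)$ then recovers \eqref{eq:main}.
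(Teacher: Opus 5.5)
You have the right architecture: approximate each line, choose one multiplier outside $\C_D$ that kills all error witnesses modulo $\C_D$, and simulate. However, two load-bearing steps fail as written.

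\textbf{The approximant is not a valid one-sided approximation of a line's falsity.} A line $L=\bigvee_i(f_i=0)$ is false iff every $f_i\neq0$. With $g_k\in\Span\{f_i\}$, your $\tilde P_L=\prod_k g_k^{p-1}$ also errs on false lines: at a point with $f_1=1$, $f_2=-1$, the combination $f_1+f_2$ vanishes, so $\tilde P_L=0$ although $L$ is false. The multiplier scheme needs one-sided error, occurring only where the approximant says ``false'' while the line is true, so that every error is witnessed by a low-degree product. The witnesses you write, $f_i^{p-1}\prod_k(1-g_k^{p-1})$, belong to $\prod_k(1-g_k^{p-1})$. That polynomial approximates the \emph{conjunction} $\bigwedge_i(f_i=0)$, not your $\tilde P_L$.

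The linear combinations must be taken of the truth indicators $g_{f_i}=\ML(1-f_i^{p-1})$, which have degree $d=(p-1)r$. Then $P_L=\prod_{j\le t}(1-q_j^{p-1})$ with $q_j\in\Span\{g_{f_i}\}$ equals $1$ at every point falsifying $L$, and its errors are witnessed exactly by the products $g_{f_i}P_L$. The telescoping identity $1-P_L=\sum_iU_ig_{f_i}$ is what lets each inference be simulated once $f\,g_{f_i}P_L\in\C_D$; for example, a premise $L_0\vee(h=0)$ with $L_0$ kept in the conclusion yields $(1-g_h)fP_L\in\C_D$. This correction is also why $\deg P_L\le(p-1)^2rt$ and why the exponent in the bound is $d=(p-1)r$. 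Finally, no device to force $1\in\C_D$ is needed: the simulation gives $M\in\C_D$, contradicting the choice $M\notin\C_D$.

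\textbf{The key rank-decay lemma is only asserted, and the sketched mechanism does not give it.}

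\emph{Schwartz--Zippel.} Ordinary Schwartz--Zippel on an $R\times R$ minor is vacuous once $R\ge p$. To get $\mathbb E_{a\in\Fp^s}\rank\bigl(\sum_ia_iM_i\bigr)\ge(1-\frac1p)R$ you need the multiplicity bound $\sum_a\mult(H,a)\le(\deg H)p^{s-1}$.

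\emph{Generic versus stacked rank.} In that bound, $R$ is the rank of the generic combination $\sum_i\lambda_iM_i$ over $\Fp(\lambda)$. This can be much smaller than the rank of the stacked map $(M_1,\ldots,M_s)$, which is what actually measures the errors (take $M_i(u)=u_i$). The paper bridges this with a Frobenius kernel-descent argument, using $g_i^p-g_i\in\C_D$ and the budget $\deg P+K+pd\le D$.

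\emph{Degree loss per round.} A round does not remove a $(1-\frac1p)$ fraction of the current rank $c_P(K)$. If it did, you would get $\ln S\ge t\ln p$ with no factor $(D/16n)^{(p-1)r}$ at all. Because $q$ has degree $d$, the gain you can certify comes from $q\V_{K-d}$: $Z_{P(1-q^{p-1})}(K)\supseteq q\V_{K-d}+Z_P(K)$ via $q-q^p\in\C_D$. This gives $c_{P(1-q^{p-1})}(K)\le c_P(K)-(1-\frac1p)c_P(K-d)$.

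\emph{Comparing degrees.} You then need $c_P(K-d)\ge\frac{N_n(K-d)}{N_n(K)}c_P(K)$, i.e.\ that $c_P(k)/N_n(k)$ is nonincreasing in $k$. The paper proves this via a degree-compatible, divisor-closed multilinear standard-monomial basis and double counting. This yields decay $(1-\eta)^t$ with $\eta\ge\frac{p-1}{2p}(D/16n)^d$ for $K=\lfloor D/4\rfloor$, which is where that factor comes from.
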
 
For constant $p,r$ and $w(F)$, linear PC degree lower bounds therefore imply exponential size. 
As applications, we give a few such size lower bounds in Section~\ref{sec:applications}.
\smallskip

Previously, Khaniki~\cite{Khaniki} proved a size--width relation of the form $w=O(\sqrt{(n+S)\ln S})$ which yielded almost quadratic size lower bounds. 

\begin{remark}[Polynomial-clause axioms]\label{rmk:poly-clause}
Theorem~\ref{thm:quantitative} extends to polynomial equation systems $F=\{F_1,\ldots,F_m\}$ where $F_i=\bigvee_{j=1}^{w_i}(H_{i,j}=0)$ and $\deg H_{i,j}(x)\le r$. 
In this case, 
\begin{equation}\label{eq:poly-Qi}
\Q(F)=\{\text{Boolean axioms}\}\cup\{Q_{F_1},\ldots,Q_{F_m}\}, \ \  Q_{F_i}=\ML\left(\prod_{j=1}^{w_i}H_{i,j}\right).
\end{equation} 
Here, $\ML$ stands for multilinearization. And we replace $w(F)$ in the theorem by 
\begin{equation}\label{eq:WF}
W(F)=(p-1)\max_i\sum_{j=1}^{w_i}\deg H_{i,j}.
\end{equation}
\end{remark}

An immediate corollary is for $\Res(\oplus_2)=\Res(\PC_1/\mathbb{F}_2)$, the system operating over disjunctions of $\mathbb F_2$-affine equations \cite{IS14}.\footnote{This system is sometimes defined to also allow \emph{semantic weakening}, where a disjunction of $\mathbb F_2$-affine equations can be derived in a step if it is implied by an earlier line over $\{0,1\}^n$. This does not affect the discussions here since the rule can be simulated syntactically with polynomial overhead, cf. \cite{IS14}.}
 
\begin{corollary}[$\Res(\oplus_2)$]\label{cor:Res2}
Suppose $F=\{F_1,\ldots,F_m\}$ where each $F_i(x)$ is a disjunction of at most $w$ $\mathbb F_2$-affine equations. 
If $\Q(F)\not\vdash_{D}1$ where $D$ satisfies $D\geq 32w$, then every $\mathrm{Res}(\oplus_2)$ refutation of $F$ has size
\begin{equation}\label{eq:mainbound}
 S\ \ge\ \exp\left( \Omega\Big( D ^2/n\Big)\right).
\end{equation}
\end{corollary}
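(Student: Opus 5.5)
The plan is to obtain Corollary~\ref{cor:Res2} directly from Theorem~\ref{thm:quantitative}, in the polynomial-clause form of Remark~\ref{rmk:poly-clause}, by specializing to $p=2$ and $r=1$.

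\textbf{Step 1: identify the system.} Since $\Res(\oplus_2)=\Res(\PC_1/\mathbb F_2)$, a $\Res(\oplus_2)$ refutation of size $S$ is an $S$-node $\Res(\PC_1/\mathbb F_2)$ refutation. If the semantic-weakening variant is used, we first invoke the polynomial-overhead syntactic simulation of \cite{IS14}. This replaces $S$ by $S^{O(1)}$, which only changes the constant in the $\Omega$ of \eqref{eq:mainbound}.

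\textbf{Step 2: set up the axioms.} Each $F_i$ is a disjunction of at most $w$ equations $H_{i,j}=0$ with $\deg H_{i,j}\le 1$. So we are exactly in the setting of Remark~\ref{rmk:poly-clause} with $r=1$. The encoding $\Q(F)$ is the one in \eqref{eq:poly-Qi}, which is the $\Q(F)$ appearing in the hypothesis. By \eqref{eq:WF} with $p=2$,
\[
W(F)=(2-1)\max_i\sum_j\deg H_{i,j}\le w .
\]

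\textbf{Step 3: check the degree threshold.} The theorem requires $D\ge\max\{32(p-1)^2r,\,8W(F)\}=\max\{32,8W(F)\}$. The hypothesis $D\ge 32w$ gives $D\ge 8w\ge 8W(F)$. It also gives $D\ge 32$, except in the degenerate case $w=0$. In that case $F$ contains the empty clause, $Q_{F_i}=1$, and $\Q(F)\vdash_0 1$, contradicting the hypothesis. So we may assume $w\ge1$.

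\textbf{Step 4: read off the bound.} Inequality \eqref{eq:main} with $p=2$ and $r=1$ reads
\[
\ln S\ge \frac{D}{128}\cdot\frac{D}{16n}=\frac{D^2}{2048\,n},
\]
which is \eqref{eq:mainbound}.

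There is no real obstacle here. The only points needing care are confirming that the multilinearized product encoding in Remark~\ref{rmk:poly-clause} is the intended $\Q(F)$, and handling the semantic-weakening variant via \cite{IS14}.
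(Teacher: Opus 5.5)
Your proposal is correct and follows the paper's route exactly: the corollary is the immediate specialization of Theorem~\ref{thm:quantitative}, in the polynomial-clause form of Remark~\ref{rmk:poly-clause}, to $p=2$ and $r=1$. Your parameter checks are right ($W(F)\le w$, and $D\ge 32w$ with $w\ge 1$ gives $D\ge\max\{32,8W(F)\}$), as is the evaluation of \eqref{eq:main} to $\ln S\ge D^2/(2048n)$; the treatment of the $w=0$ case and of semantic weakening is consistent with the paper's footnote.
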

In this case, the parameter dependence is comparable to classical width/degree--size relations \cite{B-SW,IPS}. 
Prior to~\cite{Braun2026exponential}, superpolynomial lower bounds beyond tree-like $\Res(\oplus_2)$ began with the (bottom-)regular lower bounds of Efremenko, Garl\'ik, and Itsykson~\cite{EGI25}, followed by separations between regular and general proofs~\cite{BCD24}. Subsequent work proved lower bounds under depth restrictions reaching nearly quadratic~\cite{AI25,EI25,BBCI26,IPS26}, and supercritical size--depth tradeoffs~\cite{IK26,IPS26}.

\paragraph{Proof overview.} 
For a proof line $L=(h_{L,1}(x)=0)\vee\cdots\vee(h_{L,m_L}(x)=0)$, let \(g_{L,i}\) denote the natural \(\{0,1\}\)-valued truth indicator
of its \(i\)th predicate, i.e., $g_{L,i}(x)=1 \iff h_{L,i}(x)=0$. 
The falsity polynomial of the line is
\begin{equation} 
\widehat L=\prod_{i=1}^{m_L}(1-g_{L,i}),
\end{equation}
where the equation \(\widehat L=0\) represents the predicate \(L\).

We associate with \(L\) a Razborov--Smolensky polynomial
\begin{equation}
P_L=\prod_{j=1}^t(1-q_{L,j}^{p-1}),
\qquad
q_{L,j}\in\Span\{g_{L,i}:i\in[m_L]\}.
\end{equation}
This is a one-sided approximation to $\widehat L$, where an error occurs only when \(P_L=1\) while some \(g_{L,i}=1\), and such an error is witnessed by one of the \emph{error-witnesses} \(g_{L,i}P_L\) being 1.\footnote{The use of Razborov--Smolensky approximation in proof complexity dates back to~\cite{BIKPRS96}, where it was implemented differently using extension variables.}

Such a line-by-line approximation however need not respect inference.
To repair this, we use a common-multiplier idea extracted from \cite{Braun2026exponential}. 
Namely, for any polynomial $f$, the $f$-multiplied lines $f\widehat L=0$ has zero set that represents the ORified line
\begin{equation}
 (f=0)\vee L
\end{equation}
so they preserve inferences. There, $f$ was chosen so that $f\widehat L$ vanishes on every high-rank line $L$.

We instead choose $f$, $\{P_L\}$ jointly so that $f$ annihilates the error-witnesses, not the lines themselves, and only so modulo bounded-degree PC consequences of the formula. 
Concretely, we impose
\begin{equation}
f g_{L,i}P_L\in\C_D
\quad\text{for every }L,i,
\quad\text{while $f\notin\C_D$}.
\end{equation}
To show the existence of such $f$ and $P_L$, we measure the error set by the rank of the map
\begin{equation}\label{eq:intro-rank}
f\longmapsto \Bigl( g_{L,i} P_L f \mod \C_D\Bigr)_{i=1}^{m_L}\quad \text{defined on $\Fp[x]_{\leq K}$},
\end{equation}
$K\le D$ a degree parameter. The main algebraic tool (Lemma~\ref{lem:pencil} and Lemma~\ref{lem:onestep}) shows that by adding one more approximation factor to $P_L$, we can shrink this rank by a constant factor. 
So by increasing the approximation level $t$ to a proper value, the sum of ranks \eqref{eq:intro-rank} over all lines is smaller than $\dim(\V_K/(\V_K\cap \C_D))$, at which point there must exist an $f\notin\C_D$ annihilating all the error-witnesses.
\smallskip 

The paper is organized as follows. 
Section~\ref{sec:rank} proves the linear-algebraic lemmas. 
Section~\ref{sec:approximation} uses them to establish the existence of approximators that allow a nontrivial multiplier. 
Section~\ref{sec:simulation} simulates the resulting derivation by low-degree PC. 
Section~\ref{sec:bounds} computes the parameters, and Section~\ref{sec:applications} gives some applications. 
We conclude the paper in Section~\ref{sec:conclusions} with a generalization to $\ACzerop$-Frege and open problems.

\section{Preliminaries}\label{sec:preliminaries}
We write $\log$ for base-2 logarithm. 
Denote $\mathbb F_p[x]=\mathbb F_p[x_1,\ldots,x_n]$ and
\[\V_k=\mathbb F_p[x]_{\le k}.\]
All degree parameters are nonnegative integers, and $r\ge1$ in $\Res(\PC_r/\Fp)$. 
For a set $\mathcal A$ of polynomial equations, we identify an equation $P=0$ with the polynomial $P$ and write $\mathcal A\vdash_k P$ for a PC derivation whose polynomials have degree at most $k$. 

We denote an ordinary clause $J=l_1\vee\ldots\vee l_s$ as the set $\{l_1,\cdots,l_s\}\subseteq \{x_i,1-x_i:i\in[n]\}$. Recall its falsity polynomial is $Q_J=\prod_{i=1}^s(1-l_i)$. 
A CNF $F$ encoded as a set of polynomial constraints is 
\[\Q(F)=\Bool\cup\{Q_J:J\in F\},\qquad \Bool=\{x_i^2-x_i:i\in[n]\}.\] 
We use $w=w(F)$ for the width of $F$.

Fix a degree parameter $D$ and input polynomial set $\Q\supseteq \Bool$ throughout Sections~\ref{sec:preliminaries}--\ref{sec:simulation}. 
Denote by $\C_D=\C_D(\Q)$ the space of all polynomials derivable from $\Q$ in PC degree $D$. 
Write 
\[
 \C_{\le k}=\C_D\cap\V_k\quad(0\le k\le D)
\]
where note that the inclusion $\C_k(\Q)\subseteq \C_{\leq k}$ may be proper. 
For $f\in \V_D$, write 
\begin{equation}\label{eq:modulo}
[f]\ \ \text{for its class in the quotient linear space $\mathcal V_D/\C_D$}.
\end{equation}

The following facts are standard and immediate. 

\begin{fact}\label{fact:multiplication}
If $\mathcal A\vdash_D f$, then
$\mathcal A\vdash_{\max\{D,\deg f+\deg g\}} fg$ for every polynomial $g$.
Consequently,
\begin{equation}\label{eq:absorption}
 f\in\C_D,\quad \deg f+\deg g\le D
 \quad\Longrightarrow\quad fg\in\C_D.
\end{equation}
\end{fact}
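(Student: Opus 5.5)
The plan is to argue by induction on the length of a degree-$D$ PC derivation of $f$ from $\mathcal A$. The aim is to show that each line $h$ of the derivation satisfies $\mathcal A\vdash_{D'} hg$, where $D'=\max\{D,\deg f+\deg g\}$. We cannot just multiply the final line by $g$, because intermediate lines may have degree up to $D$, and multiplying them by $g$ could exceed $D'$. So the first step is to fix the right invariant: it is cleaner to first reduce to the case where $g$ is a single variable $x_i$, and then handle monomials and general $g$.

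\textbf{Monomial case.} For $g$ a monomial $x_{i_1}\cdots x_{i_s}$, we simply append to the derivation of $f$ the multiplications $x_{i_1}f,\ x_{i_1}x_{i_2}f,\ \ldots$. These are legitimate multiplication-by-variable steps. The intermediate products have degree at most $\deg f+s\le \deg f+\deg g$. Every line of the original derivation has degree at most $D$. Hence the whole sequence has degree at most $D'$.

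\textbf{General $g$.} Write $g=\sum_m c_m m$ as a combination of monomials of degree at most $\deg g$. Derive each $mf$ as in the monomial case and combine them by the linear-combination rule. Every $mf$ has degree at most $\deg f+\deg g\le D'$, and so do the partial sums. This gives $\mathcal A\vdash_{D'} fg$.

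\textbf{The consequence \eqref{eq:absorption}.} Take $\mathcal A=\Q$. The hypothesis $f\in\C_D$ means $\Q\vdash_D f$, and $\deg f+\deg g\le D$ gives $D'=D$. Hence $\Q\vdash_D fg$, that is, $fg\in\C_D$.

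\textbf{Main obstacle.} The only subtle point is the degree bookkeeping. Each multiplication must be applied only after the full derivation of $f$, so that the degree bound becomes $\max\{D,\deg f+\deg g\}$ rather than $D+\deg g$. This is secured by multiplying the final line one variable at a time.
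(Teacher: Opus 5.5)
Your argument is correct and is the standard one the paper has in mind; the paper gives no proof, calling the fact ``standard and immediate''. You append to the degree-$D$ derivation of $f$ the variable-by-variable multiplications for each monomial of $g$, then take linear combinations, and all new lines stay within degree $\deg f+\deg g$. The induction over all lines announced in your first paragraph is unnecessary, and its invariant would fail for intermediate lines $h$ with $\deg h+\deg g>D'$. The argument you actually carry out multiplies only the final line, so it is fine.
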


\begin{fact}\label{fact:boolean}
Every polynomial $f(x)$ in Boolean variables has a unique multilinear representative $\ML(f)$ which is functionally the same as $f: \{0,1\}^n \to \Fp$. Moreover,
\[\deg\ML(f)\le\deg f, \quad \Bool\vdash_{\deg f}f - \ML(f).\] 
\end{fact}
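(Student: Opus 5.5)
The plan is to prove the two assertions of Fact~\ref{fact:boolean} directly from the Boolean axioms. For existence and uniqueness of $\ML(f)$, I use the standard argument. Replacing every power $x_i^a$ with $a\ge1$ by $x_i$ gives a multilinear polynomial that agrees with $f$ on $\{0,1\}^n$, since $x_i^a=x_i$ there. Uniqueness reduces to showing that a multilinear polynomial $g$ vanishing on all of $\{0,1\}^n$ is zero. Suppose not, and pick a monomial $x_S=\prod_{i\in S}x_i$ of $g$ with $|S|$ minimal. Evaluating $g$ at the indicator vector of $S$ kills every other monomial, because each other monomial either has a variable outside $S$ or is a proper subset of $S$ and hence absent by minimality. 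The value at that point is therefore the nonzero coefficient of $x_S$, a contradiction. The replacement map only lowers or preserves the degree of each monomial, so $\deg\ML(f)\le\deg f$.

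For the derivation, by linearity of PC (the linear combination rule) it suffices to treat a single monomial $m=c\,x_1^{a_1}\cdots x_n^{a_n}$ of degree $d\le\deg f$. I will show $\Bool\vdash_d m-\ML(m)$, where a derivation whose lines all have degree $\le d\le\deg f$ is in particular a degree-$\deg f$ derivation.

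To do this, I reduce exponents one at a time. If some $a_i\ge2$, write $m=x_i^{2}u$ with $\deg u=d-2$. Starting from the axiom $x_i^2-x_i$, I multiply by the variables of $u$ one at a time (Fact~\ref{fact:multiplication}) and scale by $c$. This gives $x_i^2u-x_iu=m-m'$, where $m'$ has degree $d-1$ and every intermediate line has degree $\le d$. Iterating on $m'$, and each subsequent monomial, strictly decreases the degree until the result is multilinear, namely $\ML(m)$. Summing the telescoping differences by linear combinations yields $m-\ML(m)$, and every line stays within degree $d$.

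Summing over the monomials of $f$ then gives $\Bool\vdash_{\deg f}f-\ML(f)$, using that $\ML$ is linear.

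There is no real obstacle here. The only point to check is the degree bookkeeping: multiplying the axiom by the cofactor $u$ produces lines of degree at most $2+\deg u=d$, so the derivation never exceeds $\deg f$.
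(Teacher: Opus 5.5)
Your proof is correct. The paper gives no argument for this fact, only calling it standard and immediate, and yours is the standard argument it has in mind: substitute $x_i^a\mapsto x_i$ to multilinearize, get uniqueness by evaluating at the indicator vector of a minimal monomial, and derive $m-\ML(m)$ monomial by monomial from multiples of $x_i^2-x_i$ without exceeding degree $\deg m$.
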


For $g\in\Fp[x]$ with $\deg g\le d$, it follows from Fact~\ref{fact:boolean} that
\begin{equation}\label{eq:frobbool}
 \Bool\vdash_{pd}g^p-g.
\end{equation}

\begin{lemma}[Base change]\label{lem:basechange}
Let $E/\Fp$ be a field extension, and regard $\C_D^E:=E\otimes_{\Fp}\C_D$ as an $E$-subspace of $E[x]_{\leq D}$. 
Then for every $k\le D$,
\[
E\otimes_{\Fp}(\C_D\cap\V_k) = \C_D^E\cap E[x]_{\le k}.
\]
Moreover, if $f\in\C_D^E$, \(g\in E[x]\), and $\deg f+\deg g\le D$, then $fg\in\C_D^E$.
\end{lemma}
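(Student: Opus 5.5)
The plan is to treat the first claim as a statement about extension of scalars for finite-dimensional subspaces, and to reduce the second to Fact~\ref{fact:multiplication} by expanding in a basis.

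For the first claim, I use that $\V_D$ has the monomial basis, which is defined over $\Fp$ and compatible with the degree filtration. Fix an $\Fp$-basis $\{e_\beta\}$ of $E$. Every element of $E[x]_{\le D}=E\otimes_{\Fp}\V_D$ can be written uniquely as $\sum_\beta e_\beta\otimes f_\beta$ with $f_\beta\in\V_D$. Two membership criteria follow:
\begin{itemize}
\item[(a)] such an element lies in $E\otimes_{\Fp}\C_D$ if and only if every $f_\beta\in\C_D$;
\item[(b)] it lies in $E[x]_{\le k}$ if and only if every $f_\beta\in\V_k$, since the degree-$\le k$ condition is coordinatewise on monomials.
\end{itemize}
To see (a), choose an $\Fp$-basis of $\C_D$, extend it to a basis of $\V_D$, and compare coordinates; this is just the flatness of vector spaces. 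Taking both conditions together, an element lies in $\C_D^E\cap E[x]_{\le k}$ if and only if every $f_\beta\in\C_D\cap\V_k$. That is exactly membership in $E\otimes_{\Fp}(\C_D\cap\V_k)$, using (a) again for the subspace $\C_D\cap\V_k$ of $\V_D$. The inclusion $\supseteq$ is also immediate directly, so only $\subseteq$ needs this coordinate argument.

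For the second claim, take $f\in\C_D^E$ and $g\in E[x]$ with $\deg f+\deg g\le D$. Put $a=\deg f$, $b=\deg g$, and assume $f,g\neq 0$. By the first claim with $k=a$, $f\in E\otimes(\C_D\cap\V_a)$, so $f=\sum_\beta e_\beta f_\beta$ with $f_\beta\in\C_D$ and $\deg f_\beta\le a$. Similarly write $g=\sum_\gamma e_\gamma g_\gamma$ with $g_\gamma\in\Fp[x]$ and $\deg g_\gamma\le b$; here we simply expand coefficients in the basis. Then
\[
fg=\sum_{\beta,\gamma} e_\beta e_\gamma\, f_\beta g_\gamma.
\]
Each product satisfies $f_\beta g_\gamma\in\C_D$ by \eqref{eq:absorption}, since $\deg f_\beta+\deg g_\gamma\le D$. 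Since $\C_D^E$ is an $E$-subspace, the whole sum lies in $\C_D^E$.

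The only real obstacle is the degree bookkeeping in the second claim. A priori the components $f_\beta$ of $f$ might have degree larger than $\deg f$, with the top parts cancelling, and then \eqref{eq:absorption} would not apply. This is exactly why the first claim is proved first: it guarantees a decomposition of $f$ whose components lie in $\C_D\cap\V_{\deg f}$.
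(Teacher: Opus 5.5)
Your proof is correct and follows the paper's argument. The first equality is the fact that intersection commutes with extension of scalars, which you verify by coordinates in an $\Fp$-basis of $E$ where the paper simply cites it. For the moreover part, you apply it with $k=\deg f$ to get components in $\C_D\cap\V_{\deg f}$ and then use \eqref{eq:absorption} termwise; the only cosmetic difference is that you expand $g$ along an $\Fp$-basis of $E$ rather than into monomials with $E$-coefficients.
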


\begin{proof}
The first equality is the finite-dimensional fact $E\otimes (V\cap W)= (E\otimes V)\cap(E\otimes W)$ applied to $V:=\mathcal C_D, W:=\mathcal V_k$. 

For the moreover part, if $\deg f=j$ then $f\in E\otimes_{\Fp}(\C_D\cap\V_j)$ by the first part, so $f=\sum_\alpha \lambda_\alpha f_\alpha$ where $f_\alpha\in\C_D\cap\V_j$ and $\lambda_\alpha\in E$.
Write $g=\sum_\beta\mu_\beta x^\beta$, $\mu_\beta\in E$, then by Fact~\ref{fact:multiplication}, every $f_\alpha x^\beta$ belongs to $\C_D$ since its degree is at most $D$. Hence, $fg=\sum_{\alpha,\beta}\lambda_\alpha\mu_\beta f_\alpha x^\beta\in\C_D^E$.
\end{proof}

\section{Algebraic tool: rank retention}\label{sec:rank}
This section proves a rank-preservation bound for linear combinations of multiplication maps (Corollary~\ref{cor:descent}).
 
Whenever they occur below, $\lambda_1,\ldots,\lambda_s$ denote algebraically independent indeterminates over $\mathbb F_p$. 

The following lemma bounds the rank loss when a matrix with $\lambda$-coefficients is specialized to values in $\Fp$. 
It relies on a multiplicity Schwartz--Zippel lemma in \cite{DKSS}.

\begin{lemma}[Rank specialization]\label{lem:pencil}
Let $M_1,\ldots,M_s$ be matrices over $\Fp$ of the same dimensions. For algebraically independent indeterminates
$\lambda_1,\ldots,\lambda_s$, put
\[
 M(\lambda)=\sum_{i=1}^s\lambda_iM_i,
 \qquad R=\rank_{\Fp(\lambda)}M(\lambda).
\]
Then 
\begin{equation}\label{eq:pencil}
\mathbb E_{a\in\Fp^s}\Big[ \rank M(a) \Big] \ge(1-\frac{1}{p})R.
\end{equation}
\end{lemma}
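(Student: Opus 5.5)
The plan is to reduce the statement to a single nonzero polynomial, namely an $R\times R$ minor of $M(\lambda)$, and then apply the multiplicity Schwartz--Zippel lemma of \cite{DKSS}. That lemma says: for a nonzero polynomial $P\in\Fp[\lambda_1,\ldots,\lambda_s]$ of total degree $d$ and a finite set $T\subseteq\Fp$,
\[
\sum_{a\in T^s}\mult(P,a)\le d\,|T|^{s-1}.
\]
We use it with $T=\Fp$. If $R=0$ there is nothing to prove, so assume $R\ge1$.

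\textbf{Choosing the minor.} Since $\rank_{\Fp(\lambda)}M(\lambda)=R$, we can choose row and column index sets of size $R$ whose submatrix $N(\lambda)=\sum_i\lambda_iN_i$ has $\det N(\lambda)\neq 0$ in $\Fp(\lambda)$. Here each $N_i$ is the corresponding submatrix of $M_i$. The entries of $N(\lambda)$ are linear forms in $\lambda$, so $P(\lambda):=\det N(\lambda)$ is a nonzero homogeneous polynomial of degree exactly $R$ with coefficients in $\Fp$.

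\textbf{Main step: rank deficiency bounds vanishing multiplicity.} We show that for every $a\in\Fp^s$,
\[
\mult(P,a)\ \ge\ R-\rank N(a)\ \ge\ R-\rank M(a).
\]
The second inequality holds because $N(a)$ is a submatrix of $M(a)$.

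For the first, set $k=R-\rank N(a)$ and pick invertible $U,V$ over $\Fp$ such that $UN(a)V$ has its last $k$ rows identically zero. Shift variables $\lambda=a+\mu$. Then
\[
U N(a+\mu) V = UN(a)V+\sum_i\mu_i\,UN_iV,
\]
and its last $k$ rows consist of linear forms in $\mu$ with no constant term. Expanding the determinant, each term takes one entry from each of these $k$ rows. Hence every monomial of $\det(UN(a+\mu)V)$ has $\mu$-degree at least $k$. Since $\det(UN(a+\mu)V)=\det U\det V\cdot P(a+\mu)$ with $\det U\det V\in\Fp^\times$, the polynomial $P$ vanishes to order at least $k$ at $a$.

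\textbf{Summing.} Combining the main step with the multiplicity Schwartz--Zippel bound for $d=R$ gives
\[
\sum_{a\in\Fp^s}\bigl(R-\rank M(a)\bigr)\ \le\ \sum_{a\in\Fp^s}\mult(P,a)\ \le\ R\,p^{s-1}.
\]
Dividing by $p^s$ yields $\mathbb E_a[R-\rank M(a)]\le R/p$, which is \eqref{eq:pencil}.

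The only step that needs care is the multiplicity bound in the main step, in particular checking that the notion of multiplicity used in \cite{DKSS} (order of vanishing of the shifted polynomial, via Hasse derivatives) agrees with the minimal $\mu$-degree computed above. By definition it does, so no further obstacle is expected.
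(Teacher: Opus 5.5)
Your proposal is correct and follows the paper's proof essentially step for step. Like the paper, you take an $R\times R$ minor $H$ with nonzero determinant, use $\Fp$-row operations at $a$ to get $R-\rank M(a)\le \mult(H,a)$, and average using the multiplicity bound $\sum_{a}\mult(H,a)\le (\deg H)p^{s-1}$ from \cite{DKSS}; the only difference is that the paper also includes a short inductive proof of that bound for completeness, whereas you cite it.
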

\begin{proof}
For a nonzero polynomial\footnote{meaning nonzero in the polynomial ring $\Fp[z_1,\ldots,z_s]$, not as a function on $\Fp^s$.} 
$H(z_1,\ldots,z_s)$ and a point $a\in\Fp^s$, let $\mult(H,a)$ be the \emph{least} degree of a nonzero term of $H(a+z)$.
We use a multiplicity bound~\cite[Lemma~8]{DKSS}
\begin{equation}\label{eq:multiplicity}
 \sum_{a\in\Fp^s}\mult(H,a)\le(\deg H)p^{s-1}.
\end{equation}
For completeness, we include a proof. 
For $s=1$, it is the bound on the sum of root multiplicities. 
For $s>1$, write $H(y,z)=\sum_{j=0}^t H_j(y)z^j$ with $H_t\ne0$. Fix $b\in\Fp^{s-1}$ and let $\mu:=\mult(H_t,b)$.
Choose a monomial $y^\alpha$ of degree $\mu$ appearing in $H_t(b+y)$, then its coefficient in $H(b+y,z)$, denoted $h_\alpha(z)$, has degree $t$.
Therefore for all $c\in\Fp$,
$\mult(H,(b,c))\le\mu+\mult(h_\alpha,c)$.
Summing over $c$ gives at most $p\mu+t$, and summing over $b$ and applying induction to $H_t$ gives~\eqref{eq:multiplicity}.

If $R=0$, the assertion \eqref{eq:pencil} is immediate. Otherwise choose an $R\times R$ submatrix of $M(\lambda)$ with nonzero determinant $H(\lambda)$. 
Given point $a\in \Fp^s$, let the rank of this submatrix be $\rho_a \leq \rank M(a)$. 
By using invertible row operations over $\Fp$, we can make $R - \rho_a$ rows vanish at $a$. This means, by expanding the sum-product definition of determinant, that the terms in $H(\lambda + a)$ all have degree at least $R - \rho_a$. Consequently,
\[
 R-\rank M(a) \le  R - \rho_a \le\mult(H,a).
\]
Average and apply~\eqref{eq:multiplicity}.
\end{proof}

The next is an elementary Frobenius-descent fact for linear subspaces. 

\begin{lemma}\label{lem:subspace-descent}
Let $E$ be a field of characteristic $p$ and let $W\subseteq E^m$
be an $E$-linear subspace. For $u=(u_1,\ldots,u_m)$, write
\[u^{[p]}=(u_1^p,\ldots,u_m^p).\]
If $u^{[p]}\in W$ for every $u\in W$, then
\[
 W=E\otimes_{\Fp}(W\cap\Fp^m),
\]
under the natural identification inside $E^m$.
\end{lemma}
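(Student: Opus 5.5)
The plan is to work with reduced row echelon form and to use that the Frobenius map $u\mapsto u^{[p]}$ commutes with everything defined over $\Fp$. The inclusion $E\otimes_{\Fp}(W\cap\Fp^m)\subseteq W$ is immediate, since $W$ is an $E$-subspace containing $W\cap\Fp^m$. For the reverse inclusion, it suffices to find a basis of $W$ consisting of vectors with all coordinates in $\Fp$.

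Let $d=\dim_E W$. Choose the unique basis $v_1,\ldots,v_d$ of $W$ in reduced row echelon form. So there are pivot indices $j_1<\cdots<j_d$ with $(v_i)_{j_k}=\delta_{ik}$, and each $v_i$ vanishes in all coordinates before $j_i$. Uniqueness holds because any vector of $W$ is determined by its pivot coordinates: if $w\in W$ has $w_{j_k}=0$ for all $k$, then $w=\sum c_kv_k$ with $c_k=w_{j_k}=0$. By hypothesis $v_i^{[p]}\in W$.

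Since $0^p=0$ and $1^p=1$, the vector $v_i^{[p]}$ has the same pivot pattern as $v_i$, namely $(v_i^{[p]})_{j_k}=\delta_{ik}$. Therefore $v_i^{[p]}-v_i\in W$ has all pivot coordinates zero, so it is $0$ by the uniqueness remark. Hence every coordinate $c$ of $v_i$ satisfies $c^p=c$. In a field, the polynomial $X^p-X$ has at most $p$ roots, and all elements of $\Fp$ are roots. So $c\in\Fp$, giving $v_i\in W\cap\Fp^m$.

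Thus $W$ is spanned over $E$ by elements of $W\cap\Fp^m$, which gives $W\subseteq E\otimes_{\Fp}(W\cap\Fp^m)$. The natural map $E\otimes_{\Fp}(W\cap\Fp^m)\to E^m$ is injective, because $\Fp$-linearly independent vectors in $\Fp^m$ remain $E$-linearly independent; this is just rank of a matrix over $\Fp$ being unchanged by field extension. So the identification is valid and equality holds. The only step needing care is the pivot-coordinate argument showing that Frobenius fixes the echelon basis. Everything else is routine.
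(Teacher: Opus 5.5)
Your proof is correct and follows the paper's argument: take the reduced row-echelon basis of $W$, note that Frobenius preserves its pivot pattern and therefore fixes it, and conclude that every entry satisfies $c^p=c$ and so lies in $\Fp$. The only difference is cosmetic. You obtain $v_i^{[p]}=v_i$ directly from the vanishing pivot coordinates of $v_i^{[p]}-v_i$, where the paper instead cites uniqueness of the reduced row-echelon form.
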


\begin{proof}
Let $T$ be the reduced row-echelon matrix whose nonzero rows form a basis of $W$. The rows of $T^{[p]}$ belong to $W$ and have the same pivot columns as $T$, so they are linearly independent. 
Thus $T^{[p]}$ is also a reduced row-echelon basis matrix of $W$. Uniqueness gives $T^{[p]}=T$. Every entry therefore satisfies $c^p=c$ and belongs to $\Fp$. Hence $W$ has a basis in $\Fp^m$, which proves the assertion.
\end{proof}

Let $E=\Fp(\lambda_1,\ldots,\lambda_s)$ be the field extension. By the right-exactness of tensoring,
\[
E\otimes_{\Fp}(\V_D/\C_D)
   \cong E[x]_{\le D}/\C_D^E
\qquad \text{where}\ \ 
\C_D^E:=E\otimes_{\Fp}\C_D.
\]
We will use this identification in the rest of the section. Below, the only properties of $\C_D$ that we need are that it includes all Boolean-axiom polynomials $x_i^2-x_i$, and that it is multiplication-absorptive within degree $D$ (Fact~\ref{fact:multiplication}).

\begin{lemma}[Kernel descent]\label{lem:generic-descent}
Let $M_1,\ldots,M_s:U\to W$ be linear between finite-dimensional $\Fp$-spaces $U$ and $W$. 
Set
\[
 M(\lambda)=\sum_{i=1}^s\lambda_iM_i,\qquad
 M(\lambda^p)=\sum_{i=1}^s\lambda_i^pM_i
\]
where the maps are extended linearly over $E$. 
If
\begin{equation}\label{eq:Frob-cond}
 \ker_E M(\lambda)\subseteq\ker_E M(\lambda^p)
\end{equation}
then
\begin{equation}\label{eq:kernelgeneric}
 \ker_E M(\lambda) = E\otimes_{\Fp}\bigcap_{i=1}^s \ker M_i.
\end{equation}
In that case, for the stacked map $\overline{M}:U\to W^s$ defined by $\overline{M}u=(M_1 u,\ldots,M_s u)$, it holds that
\begin{equation}\label{eq:rank-nullity}
 \rank_E M(\lambda)=\rank_{\Fp} \overline{M}.
\end{equation}
Consequently, there exists $c\in\Fp^s$ that satisfies

\begin{equation}\label{eq:goodpencil}
 \rank_{\Fp}\Bigl( \sum_{i=1}^s c_i M_i \Bigr) \ge ( 1- \frac1p )\rank_{\Fp}\overline{M}.
\end{equation}
\end{lemma}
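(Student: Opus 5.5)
Let $K=\ker_E M(\lambda)$. The plan is to show that $K$ is stable under coordinatewise Frobenius and then invoke Lemma~\ref{lem:subspace-descent}. Fix an $\Fp$-basis of $U$, so $U\otimes E\cong E^m$, and for $u\in E^m$ write $u^{[p]}$ for the coordinatewise $p$-th power. The matrices $M_i$ have entries in $\Fp$, and Frobenius is a ring endomorphism of $E$ fixing $\Fp$. Hence for $u\in E^m$,
\[
\bigl(M(\lambda)u\bigr)^{[p]}=\sum_i\lambda_i^pM_iu^{[p]}=M(\lambda^p)u^{[p]}.
\]
So $u\in K$ implies $u^{[p]}\in\ker_E M(\lambda^p)$. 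To conclude $u^{[p]}\in K$ I need the reverse of \eqref{eq:Frob-cond}. I will get it by a dimension count: $M(\lambda^p)$ is obtained from $M(\lambda)$ by the field embedding $\phi:E\to E$, $\lambda_i\mapsto\lambda_i^p$, which fixes $\Fp$. Minors of $M(\lambda^p)$ are $\phi$ applied to minors of $M(\lambda)$, and $\phi$ is injective, so both maps have the same rank. Therefore $\dim\ker_E M(\lambda^p)=\dim K$, and \eqref{eq:Frob-cond} upgrades to equality. Thus $u^{[p]}\in K$ for every $u\in K$, and Lemma~\ref{lem:subspace-descent} gives $K=E\otimes_{\Fp}(K\cap\Fp^m)$.

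Next I identify $K\cap\Fp^m$. For $u\in\Fp^m$, the vector $M(\lambda)u=\sum_i\lambda_iM_iu$ has each coordinate a linear form in the independent $\lambda_i$ with $\Fp$-coefficients. It vanishes iff $M_iu=0$ for all $i$, so $K\cap U=\bigcap_i\ker M_i=\ker\overline M$. This proves \eqref{eq:kernelgeneric}. Rank--nullity then gives
\[
\rank_E M(\lambda)=\dim U-\dim_E K=\dim U-\dim_{\Fp}\ker\overline M=\rank_{\Fp}\overline M,
\]
which is \eqref{eq:rank-nullity}.

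Finally, Lemma~\ref{lem:pencil} applied to matrix representations of the $M_i$ gives $\mathbb E_{a\in\Fp^s}\rank M(a)\ge(1-\frac1p)\rank_E M(\lambda)=(1-\frac1p)\rank_{\Fp}\overline M$. Some $c$ attains at least the average, which is \eqref{eq:goodpencil}.

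The main subtlety is the first step. The hypothesis only gives one inclusion, and Frobenius maps $K$ into $\ker M(\lambda^p)$ rather than into $K$. The equal-dimension argument through the injective endomorphism $\phi$ is what closes this gap. The rest is routine.
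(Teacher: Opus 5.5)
Your proposal is correct and follows the paper's proof essentially step for step: you upgrade \eqref{eq:Frob-cond} to an equality via equal ranks, use Frobenius-stability of $\ker_E M(\lambda)$ together with Lemma~\ref{lem:subspace-descent}, identify the $\Fp$-points by algebraic independence of the $\lambda_i$, and finish with rank--nullity and Lemma~\ref{lem:pencil}. Your injective embedding $\lambda_i\mapsto\lambda_i^p$ is just the Frobenius map of $E$, because the entries of the $M_i$ lie in $\Fp$; this is exactly the paper's observation that each minor of $M(\lambda^p)$ is the $p$th power of the corresponding minor of $M(\lambda)$.
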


\begin{proof}
Choose $\Fp$-bases of $U, W$ so the maps $M_i$ have matrix form over $\Fp$. 
Each minor of $M(\lambda^p)$ is the $p$th power of the corresponding minor of $M(\lambda)$, so the two matrices have equal ranks. 
The assumed kernel inclusion \eqref{eq:Frob-cond} is therefore an equality.

Denote $K=\ker_E M(\lambda)$. If $u\in K$, raising $M(\lambda)u$ to the $p$th power coordinatewise gives $M(\lambda^p)u^{[p]}=0$, so by the kernel equality, $u^{[p]}\in K$, and thus by Lemma~\ref{lem:subspace-descent}, $K$ has a basis over $\Fp$.
For $u\in K\cap U$, the identity $\sum_i\lambda_iM_i u=0$ implies $M_i u=0$ for every $i$ since the $\lambda_i$'s are algebraically independent.
Hence, $K\cap U=\bigcap_i\ker M_i$, proving \eqref{eq:kernelgeneric}. Consequently, $\rank_E M(\lambda) = \dim U - \dim(\mathop\cap_i \ker M_i) = \rank_{\Fp} \overline{M}$, proving \eqref{eq:rank-nullity}. 
Then Lemma~\ref{lem:pencil} gives~\eqref{eq:goodpencil}.
\end{proof}

\begin{remark}\label{rmk:Frob-cond}
Without condition \eqref{eq:Frob-cond} it can happen that $\rank_E M(\lambda) < \rank_{\Fp} \overline{M}.$ 
For example, consider
\[
M_i:\mathbb F_p^s\to\mathbb F_p, \qquad M_i(u)=u_i.
\]
The map $\overline M$ is the identity of rank $s$, while $M(\lambda)$ is $M(\lambda)(u)=\sum_i \lambda_i u_i$ having rank $1$ over $E$. 
For $s=2$, for instance, $(\lambda_2,-\lambda_1)\in\ker M(\lambda)$ while $\ker_{\Fp}(\overline M)=0$.
\end{remark}

We now verify condition \eqref{eq:Frob-cond} for the specific linear operator we later use. 

\begin{corollary}[Kernel descent for multiplication map]\label{cor:descent}
Let $g_1,\ldots,g_s,h_1,\ldots,h_t\in\Fp[x]$ satisfy
$\deg g_i\le a$ and $\deg h_j\le b$.
Suppose $k\ge0$ and
\begin{equation}\label{eq:descentbudget}
 b+k+pa\le D.
\end{equation}
Define
\begin{equation}\label{eq:Mi}
 M_i:\V_k\longrightarrow(\V_D/\C_D)^t,
 \qquad
 M_i f=([f g_i h_j])_{j=1}^t.
\end{equation}
For $E=\Fp(\lambda_1,\ldots,\lambda_s)$, these maps satisfy
\begin{equation}\label{eq:kernel-verify}
 \ker_E M(\lambda)\subseteq\ker_E M(\lambda^p).
\end{equation}
In particular, the rank-preservation bound \eqref{eq:goodpencil} holds.
\end{corollary}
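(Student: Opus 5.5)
We must show: if $f\in E[x]_{\le k}$ satisfies $f\cdot\sum_i\lambda_i g_i\cdot h_j\in\C_D^E$ for all $j$, then $f\cdot\sum_i\lambda_i^p g_i\cdot h_j\in\C_D^E$ for all $j$. Here $\C_D^E$ is the base-changed space of Lemma~\ref{lem:basechange}, and $[\cdot]$ is taken in $E[x]_{\le D}/\C_D^E$. The plan is to use the Frobenius identity on the linear form $G_\lambda:=\sum_i\lambda_i g_i$, working modulo the Boolean axioms. Since the $g_i$ have coefficients in $\Fp$, we have
\[
G_\lambda^p=\sum_i\lambda_i^p g_i^p ,
\]
and by \eqref{eq:frobbool} each $g_i^p-g_i\in\C_{pa}(\Bool)\subseteq\C_D$. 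Hence
\[
G_\lambda^p-\sum_i\lambda_i^pg_i=\sum_i\lambda_i^p(g_i^p-g_i)\in \C_D^E ,
\]
a polynomial of degree at most $pa$.

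Now fix $f$ in the kernel, so $fG_\lambda h_j\in\C_D^E$ for every $j$. First, multiply by $G_\lambda^{p-1}$. The product $fG_\lambda h_j$ has degree at most $k+a+b$, and $G_\lambda^{p-1}$ has degree at most $(p-1)a$. The total is at most $k+b+pa\le D$ by \eqref{eq:descentbudget}. So the ``moreover'' part of Lemma~\ref{lem:basechange} gives $fG_\lambda^p h_j\in\C_D^E$.

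Second, multiply $\sum_i\lambda_i^p(g_i^p-g_i)$ by $fh_j$. The degree budget is again $pa+k+b\le D$, so Lemma~\ref{lem:basechange} gives
\[
fh_j\Bigl(G_\lambda^p-\sum_i\lambda_i^pg_i\Bigr)\in\C_D^E .
\]
Subtracting this from $fG_\lambda^ph_j\in\C_D^E$ yields $f\bigl(\sum_i\lambda_i^p g_i\bigr)h_j\in\C_D^E$ for every $j$, that is, $f\in\ker_E M(\lambda^p)$. This establishes \eqref{eq:kernel-verify}. Lemma~\ref{lem:generic-descent} then gives \eqref{eq:goodpencil}, with $U=\V_k$ and $W=(\V_D/\C_D)^t$ finite-dimensional.

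The main point to be careful about is the degree accounting. The absorption property of Lemma~\ref{lem:basechange} must be applied to an element of $\C_D^E$ and a multiplier whose degrees sum to at most $D$, and the intermediate product $fG_\lambda^ph_j$ must stay within $\V_D$. Both checks reduce exactly to the budget $b+k+pa\le D$.

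A second point is identifying $M(\lambda)$ on $E\otimes\V_k$ with the map $f\mapsto([fG_\lambda h_j])_j$ into $(E[x]_{\le D}/\C_D^E)^t$. This uses the identification $E\otimes(\V_D/\C_D)\cong E[x]_{\le D}/\C_D^E$ stated before Lemma~\ref{lem:generic-descent}.
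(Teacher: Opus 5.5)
Your proof is correct and is essentially the paper's argument. You multiply the kernel relation by $q_\lambda^{p-1}$ using the absorption part of Lemma~\ref{lem:basechange} within the budget $b+k+pa\le D$, then use $q_\lambda^p=\sum_i\lambda_i^p g_i^p$ together with $g_i^p-g_i\in\C_D$ from \eqref{eq:frobbool} to replace $g_i^p$ by $g_i$, and conclude via Lemma~\ref{lem:generic-descent}.
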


\begin{proof}
Denote $q_\lambda=\sum_i\lambda_i g_i$.
Any $f\in\ker_E M(\lambda)$ satisfies $h_\alpha q_\lambda f\in\C_D^E$, so $h_\alpha q_\lambda^p f\in\C_D^E$ by the moreover part of Lemma~\ref{lem:basechange} and \eqref{eq:descentbudget}. 
Recall that $g_i\in \Fp[x]$, so $g_i^p-g_i\in \C_D$ by \eqref{eq:frobbool}. 
Together, 
\begin{equation}\label{eq:multiplication-check-1} 
h_\alpha q_\lambda^p f  -  \sum_i\lambda_i^p h_\alpha (g_i^p-g_i) f  \in\C_D^E.
\end{equation}
Since $q_\lambda^p = \sum_{i} \lambda_i^p g_i^p$ over characteristic $p$, the element in \eqref{eq:multiplication-check-1} is $\sum_i\lambda_i^p h_\alpha g_i f$. So $f\in\ker_E M(\lambda^p)$, and Lemma~\ref{lem:generic-descent} gives the bound.
\end{proof}

\section{Approximating a proof}\label{sec:approximation}
Fix any $g_1,\ldots,g_s\in\Fp[x]$ of degree at most $d$. 
We consider polynomials, for integers $t\ge1$,
\begin{equation}\label{eq:product}
 P=\prod_{j=1}^t(1-q_j^{p-1}),
 \qquad q_j\in\Span_{\Fp}\{g_1,\ldots,g_s\}.
\end{equation}
On the Boolean cube, $P$ is 0-1 valued. The equation $P=0$ is a one-sided approximation of the predicate $\mathop\lor_{i=1}^s (g_i \neq 0)$ \cite{Razborov-polynomial,Smolensky}, since $P(x)=1$ on the common zeros of $g_i$, and errors occur only where $P(x)=1$ while some $g_i(x)\ne0$. 
Thus all errors are witnessed by one of the \emph{error polynomials} $Pg_i$ being nonzero. 
We measure the dimension of the common mod-$\C_D$ annihilators of these polynomials, as follows. 

For parameters $a,k$ where $a+k +d \le D$, and any $P\in \mathcal V_a$, define the \emph{error-quotient} map:
\begin{equation}\label{eq:rankdef}
 \begin{aligned}
 T_{P,k}:\V_k&\longrightarrow(\V_D/\C_D)^s\\ 
       T_{P,k}(f)&=([Pg_if])_{i=1}^s 
 \end{aligned}
\end{equation}
and let 
\begin{equation}\label{eq:ZPcP}
Z_P(k) = \ker T_{P,k}, \qquad c_P(k) =\rank T_{P,k}.
\end{equation}
Thus $c_P(k)$ counts the independent conditions that multiplication by $f$ must satisfy to put every error polynomial in $\C_D$. 

We need a degree-based inequality between these ranks. 
Recall the dimension of the multilinear polynomials of degree at most $k$ ($\leq n$) is
\begin{equation}\label{eq:boolean-dimension}
 N_n(k)=\sum_{j=0}^{k}\binom nj.
\end{equation}
The following inequality uses basics of degree-respecting monomial orders and a simple double-counting argument (cf. \cite[Prop. 4.5.1]{EngelSperner}). 

\begin{lemma}\label{lem:hilbert}
Let $a+K+d\le D$ and $K\le n$. Then the normalized rank $\frac{c_P(k)}{N_n(k)}$ is nonincreasing for $0\le k\le K$. In particular, if $K\ge d$, then
\begin{equation}\label{eq:hilbert}
 c_P(K-d) \ge \frac{N_n(K-d)}{N_n(K)} {c_P(K)}.
\end{equation}
\end{lemma}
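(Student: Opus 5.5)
The plan is to reduce the statement to a combinatorial fact about an up-closed family of subsets of $[n]$, via leading monomials. First I would pass to multilinear polynomials. If $\Bool\vdash_{k}f-\ML(f)$ (Fact~\ref{fact:boolean}) and $f\in\V_k$, then $f-\ML(f)$ lies in $\C_{\le k}$. Since $\deg(Pg_i)+k\le a+d+K\le D$, absorption~\eqref{eq:absorption} puts $Pg_i(f-\ML(f))$ in $\C_D$. So $T_{P,k}$ factors through $\ML:\V_k\to\ML_k$, the space of multilinear polynomials of degree $\le k$, which has dimension $N_n(k)$.

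Let $Z'_k\subseteq\ML_k$ be the kernel of the induced map. Then $c_P(k)=N_n(k)-\dim Z'_k$, and $Z'_k=Z'_K\cap\ML_k$ because $T_{P,k}$ is the restriction of $T_{P,K}$.

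Next I fix a degree-respecting monomial order on multilinear monomials (graded lex), identified with subsets of $[n]$. I take $L\subseteq 2^{[n]}$ to be the set of leading monomials of $Z'_K$. By Gaussian elimination, $\dim Z'_k=|\{S\in L:|S|\le k\}|$. This uses the fact that, since the order respects degree, an element of $Z'_K$ has leading monomial of degree $\le k$ exactly when it lies in $\ML_k$.

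The key structural claim is that $L$ is up-closed. Let $f\in Z'_K$ have $\deg f<K$ and leading monomial $S$, and let $j\notin S$. Then $x_jf\in Z_P(K)$ by absorption, since the degrees still fit in $D$, and hence $\ML(x_jf)\in Z'_K$. The only terms of $f$ that $\ML(x_j\cdot)$ sends to $S\cup\{j\}$ are $S$ and $S\cup\{j\}$. The monomial $S\cup\{j\}$ has larger degree than $S$, so it cannot appear in $f$; therefore $S\cup\{j\}$ has nonzero coefficient. It remains to check that no other term of $f$ produces a larger monomial. This needs monotonicity of the order under $T\mapsto T\cup\{j\}$ for $T<S$, and it is the step I expect to need the most care. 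I would first try graded lex with a direct case check: when $j\in T$ the degree stays at most $|T|\le|S|<|S\cup\{j\}|$; when $j\notin T$, adding $j$ to both sides preserves graded-lex comparison. Any remaining issue can be handled by replacing $f$ with a suitable reduced basis element.

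Given up-closure, I would finish by LYM-style double counting as in \cite[Prop.~4.5.1]{EngelSperner}. Let $L^{(j)}$ be the level-$j$ part of $L$ and $\delta_j=|L^{(j)}|/\binom nj$. Count pairs $(S,S\cup\{i\})$ with $S\in L^{(j)}$. Each such $S$ has $n-j$ supersets at level $j+1$, all in $L$, while each set at level $j+1$ has only $j+1$ subsets. This gives $|L^{(j)}|(n-j)\le|L^{(j+1)}|(j+1)$, so $\delta_j\le\delta_{j+1}$.

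The cumulative density $\dim Z'_k/N_n(k)$ is the weighted average of $\delta_0,\ldots,\delta_k$ with weights $\binom nj$. Appending a level whose density is at least every previous one cannot decrease such an average, so $\dim Z'_k/N_n(k)$ is nondecreasing in $k$. Equivalently, $c_P(k)/N_n(k)=1-\dim Z'_k/N_n(k)$ is nonincreasing for $0\le k\le K$. Taking $k=K-d$ against $K$ gives~\eqref{eq:hilbert}.
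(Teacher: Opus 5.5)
Your proof is correct and follows essentially the same route as the paper. Both use a degree-respecting monomial order, degree-compatibility from $Z_P(k)=Z_P(K)\cap\V_k$, closure of the (non-)leading monomials under multiplication by variables, and LYM-type double counting followed by cumulative averaging. The only difference is bookkeeping: you pass to the multilinear quotient first and show the leading monomials are up-closed, whereas the paper works in $\V_K$ directly. There it gets down-closure of the standard monomials from $x^\alpha Z_P(k)\subseteq Z_P(k+|\alpha|)$, and their multilinearity from $x_i^2-x_i\in Z_P(k)$, which spares it your $\ML(x_jf)$ leading-term check. The step you flagged as delicate is just multiplicativity of monomial orders ($T<S\Rightarrow x_jx^T<x_jx^S$), so you do not need the fallback to reduced basis elements.
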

\begin{proof}
Choose a degree-respecting monomial order, such as the graded lexicographic order. 
The monomials of degree $\leq K$ that are not leading monomials of elements of $Z_P(K)$ form a quotient basis of $\V_K/Z_P(K)$. 
These elements are closed under sub-monomials since $x^\alpha Z_P(k)\subseteq Z_P(k+|\alpha|)$ when $k+|\alpha|\le K$, and they are multilinear since all the Boolean axioms lie in \(Z_P(k)\) if $k\geq 2$. Moreover, since $Z_P(k)=Z_P(K)\cap\V_k$, they are degree-compatible, i.e., the elements of degree at most $k$ form a basis of $\V_k/Z_P(k)$ for $k\le K$.

Let $b_j$ count these basis monomials of degree $j$. By a double-counting of the containment relations between consecutive degrees, we have $(j+1)b_{j+1}\le(n-j)b_j$, thus $b_j/\binom nj$ is nonincreasing, and so are their cumulative averages, which are $c_P(k)/N_n(k)$.
\end{proof}

\begin{lemma}[Approximation gain]\label{lem:onestep}
Suppose $P\in\mathcal V_a$, $K\ge d$ and $a+K+pd\le D$. Then there exists $q\in\Span_{\Fp}\{g_1,\ldots,g_s\}$ satisfying
\begin{equation}\label{eq:shrinkage}
 c_{P(1-q^{p-1})}(K)  \le c_P(K) - \left(1 - \frac{1}{p}\right) c_P(K-d).
\end{equation}
\end{lemma}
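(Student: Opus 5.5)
Write $P'=P(1-q^{p-1})$. The plan is to compare the kernels $Z_{P'}(K)$ and $Z_P(K)$. Since $P'g_if=Pg_if-q^{p-1}Pg_if$, a natural sufficient route is to exhibit a subspace of $\V_K$ of dimension at least $(1-\frac1p)c_P(K-d)$ which lies in $Z_{P'}(K)$, is transversal to $Z_P(K)$, and is killed by $T_{P'}$ in addition to $Z_P(K)$ itself. That gives $\dim Z_{P'}(K)\ge \dim Z_P(K)+(1-\frac1p)c_P(K-d)$, i.e.\ \eqref{eq:shrinkage}.

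\textbf{Step 1: $Z_P(K)\subseteq Z_{P'}(K)$.} If $Pg_if\in\C_D$ for all $i$, then $q^{p-1}Pg_if\in\C_D$ by Fact~\ref{fact:multiplication}, since $\deg q^{p-1}\le (p-1)d$ and $\deg(Pg_if)\le a+d+K$. The budget $a+K+pd\le D$ covers this. Hence $P'g_if\in\C_D$.

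\textbf{Step 2: new kernel elements of the form $qf$.} Take $f\in\V_{K-d}$, so that $qf\in\V_K$. Modulo $\C_D$ and using \eqref{eq:frobbool}, $q^p\equiv q$, and we get
\[
P'g_i(qf)=Pg_iqf-Pg_iq^pf\equiv 0 .
\]
This requires $\deg(q^p-q)+\deg(Pg_if)\le pd+a+d+K-d\le D$, which again is within budget. So $q\V_{K-d}\subseteq Z_{P'}(K)$ for every choice of $q$. It then suffices to choose $q$ so that
\[
\dim\bigl((q\V_{K-d}+Z_P(K))/Z_P(K)\bigr)\ge(1-\tfrac1p)c_P(K-d).
\]
This dimension is the rank of the map $f\mapsto T_{P,K}(qf)=([Pg_iqf])_i$ on $\V_{K-d}$.

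\textbf{Step 3: choose $q$ via Corollary~\ref{cor:descent}.} Write $q=\sum_l c_lg_l$. Then the map in Step 2 is $\sum_l c_lM_l$, where $M_lf=([fg_l\cdot Pg_i])_{i=1}^s$ on $\V_{K-d}$. This is exactly \eqref{eq:Mi} with $h_i=Pg_i$ (degree $b\le a+d$), $k=K-d$, and the $g_l$ of degree $\le d$. Condition \eqref{eq:descentbudget} reads $a+d+K-d+pd=a+K+pd\le D$, which holds. So there is $c\in\Fp^s$ with $\rank\sum_lc_lM_l\ge(1-\frac1p)\rank\overline M$.

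\textbf{Step 4 (main obstacle): show $\rank\overline M\ge c_P(K-d)$.} Here $\overline M f=([Pg_lg_if])_{l,i}$. I would compare with $T_{P,K-d}f=([Pg_if])_i$ by showing $\ker\overline M\subseteq Z_P(K-d)$. Suppose $Pg_lg_if\in\C_D$ for all $l,i$. Taking $l=i$ gives $Pg_i^2f\in\C_D$, and multiplying by $g_i^{p-2}$ gives $Pg_i^pf\equiv Pg_if$, so $Pg_if\in\C_D$. The degrees are $a+pd+K-d\le D$, fine. In characteristic $2$ one just uses $g_i^2\equiv g_i$ directly. Hence $\rank\overline M\ge c_P(K-d)$, and combining Steps 1–4 yields \eqref{eq:shrinkage}.

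The delicate point is the degree bookkeeping in Step 4 and in the descent condition; the structural content is Corollary~\ref{cor:descent}.
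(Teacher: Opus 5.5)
Your proposal is correct and follows the paper's proof essentially step for step. It uses the same containment $Z_{P'}(K)\supseteq q\V_{K-d}+Z_P(K)$, the same application of Corollary~\ref{cor:descent} with $h_i=Pg_i$ on domain $\V_{K-d}$, and the same $g_i^{p-2}$ trick to compare $\ker\overline M$ with $Z_P(K-d)$. The only difference is cosmetic: you prove just the inclusion $\ker\overline M\subseteq Z_P(K-d)$, which is all that is needed, whereas the paper records equality.
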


\begin{proof}
Define a map on $\V_{K-d}$ by
\begin{equation}\label{eq:Mi-application}
M_i f=([Pg_\alpha g_i f])_{\alpha=1}^s.
\end{equation}
Then the stacked map 
\begin{equation}\label{eq:overline-M}
\overline{M}=(M_1,\ldots,M_s)
\end{equation}
has kernel $Z_P(K-d)$. Indeed, one inclusion follows from the bounded-degree multiplication absorption property of $\C_D$ (Fact~\ref{fact:multiplication}). 
Conversely, if $f\in\ker \overline{M}$, then $Pg_i^2f\in\C_D$ for every $i$, so multiplying by $g_i^{p-2}$ and using $g_i^p-g_i\in\C_D$, we get that $Pg_if\in\C_D$. Thus
\begin{equation}\label{eq:rank-M}
\rank \overline{M}=c_P(K-d).
\end{equation}

We apply Corollary~\ref{cor:descent} with $h_\alpha \leftarrow Pg_\alpha$, $t\leftarrow s$, and domain $\V_{K-d}$, where the degree condition is satisfied since $\deg h_\alpha\le a+d$ and $(a+d)+(K-d)+pd\le D$. As a result, we get a choice 
\begin{equation}\label{eq:choice-q}
\begin{aligned}
q=\sum_i c_i g_i \quad &\text{such that}\\
\rank\Bigl( f \mapsto ([Pg_\alpha qf])_\alpha \Bigr) \ge & \left(1-\frac1p\right)c_P(K-d).
\end{aligned}
\end{equation}
Since the map in \eqref{eq:choice-q} decomposes by $f\mapsto qf \mapsto T_{P,K}(qf)$, this inequality means that $q\mathcal V_{K-d}$ modulo its intersection with $Z_P(K)$ has dimension at least the RHS, or
\begin{equation}\label{eq:onestep-1}
\dim\frac{q\V_{K-d}+Z_P(K)}{Z_P(K)}
\ge
\left(1-\frac1p\right)c_P(K-d).
\end{equation}

For $P'=P(1-q^{p-1})$, its multiplication map on $\V_K$ has kernel 
\begin{equation}\label{eq:onestep-2}
Z_{P'}(K) \supseteq q\V_{K-d}+Z_P(K).
\end{equation}
To see this, first take any $g \in Z_P(K)$. Then $Pg_i g\in\C_D$, so $(1-q^{p-1})Pg_i g\in \C_D$ because its degree is at most $a+K+pd\le D$ and we can use Fact~\ref{fact:multiplication}. This shows $g \in Z_{P'}(K)$. 
Next, take any $g\in\V_{K-d}$, then $P'g_i(q g)=Pg_i g (q-q^p)\in\C_D$ since $q-q^p\in\C_D$, where the degree is at most $a+d+(K-d)+pd=a+K+pd\le D$, so $q g\in Z_{P'}(K)$.

Together, \eqref{eq:onestep-1} and \eqref{eq:onestep-2} give
\[\dim Z_{P'}(K)-\dim Z_P(K) \ge \left(1-\frac1p\right)c_P(K-d).\]
The lemma follows since $c_P(K)=\dim \V_K - \dim Z_P(K)$ and $c_{P'}(K)=\dim \V_K - \dim Z_{P'}(K)$.
\end{proof}

We now use Lemma~\ref{lem:onestep} to construct the approximators and a common-multiplier $f$.

\begin{theorem}[Common multiplier]\label{thm:common}
Suppose $1\notin\C_D$ and the parameters $t,d,K$ satisfy $t\geq 1$, $d\le K \le n$, and $K+d+(p-1)dt \le D$. 
Suppose we are given an $S \ge1$ and for each $\nu\in[S]$, a family of polynomials $(g_{\nu,i})_{i=1}^{m_\nu}$ of degree at most $d$. Denote
\begin{equation}\label{eq:approx-parameters}
 \eta=\left(1 - \frac{1}{p}\right)\frac{N_n(K-d)}{N_n(K)},\qquad
 h_K=\dim \frac{\V_K}{\C_{\le K}}.
\end{equation}
Then for each $\nu$, there exists $P_\nu$ of the form \eqref{eq:product} with $t$ factors, such that the corresponding multiplication-map rank $c_{P_\nu}(K)$ defined by~\eqref{eq:ZPcP} satisfies 
\begin{equation}\label{eq:ratio-t}
c_{P_\nu}(K)\le h_K(1-\eta)^t.
\end{equation}
In particular, if 
\begin{equation}\label{eq:size-cond}
S(1-\eta)^t<1,
\end{equation} 
then there exists a multilinear $f\in\V_K$ satisfying
\begin{equation}\label{eq:relativegoal}
 f\notin\C_D,\qquad fg_{\nu,i}P_\nu\in\C_D
       \quad\text{for all }\nu,i.
\end{equation}
\end{theorem}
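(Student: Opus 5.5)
The plan is to build each $P_\nu$ greedily, one factor at a time, using Lemma~\ref{lem:onestep}, and to control the rank with Lemma~\ref{lem:hilbert}. The multiplier $f$ then comes from a dimension count.

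Fix $\nu$ and set $P^{(0)}=1$, so $a_0=0$. Given $P^{(j)}$ with $j<t$ factors, each of degree at most $(p-1)d$, we have $\deg P^{(j)}\le a_j:=(p-1)dj$. The hypothesis of Lemma~\ref{lem:onestep} is
\[
a_j+K+pd\le D .
\]
This holds because $(p-1)dj+K+pd\le (p-1)d(t-1)+K+pd=K+d+(p-1)dt\le D$. The same bound guarantees the condition $a+K+d\le D$ needed for Lemma~\ref{lem:hilbert}. So Lemma~\ref{lem:onestep} gives some $q_{j+1}\in\Span\{g_{\nu,i}\}$ such that $P^{(j+1)}=P^{(j)}(1-q_{j+1}^{p-1})$ satisfies
\[
c_{P^{(j+1)}}(K)\le c_{P^{(j)}}(K)-(1-\tfrac1p)\,c_{P^{(j)}}(K-d).
\]
Lemma~\ref{lem:hilbert} (valid since $K\ge d$ and $K\le n$) gives $c_{P^{(j)}}(K-d)\ge \frac{N_n(K-d)}{N_n(K)}c_{P^{(j)}}(K)$. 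Hence $c_{P^{(j+1)}}(K)\le(1-\eta)c_{P^{(j)}}(K)$. Iterating $t$ times yields $c_{P_\nu}(K)\le(1-\eta)^t c_{1}(K)$ with $P_\nu:=P^{(t)}$.

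It remains to bound $c_1(K)\le h_K$. Here $T_{1,K}(f)=([g_if])_i$, and every $f\in\C_{\le K}$ lies in its kernel by Fact~\ref{fact:multiplication}, since $K+d\le D$. So $\rank T_{1,K}\le \dim\V_K/\C_{\le K}=h_K$, which gives \eqref{eq:ratio-t}.

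For the multiplier, consider the stacked map $f\mapsto (T_{P_\nu,K}f)_{\nu\in[S]}$ on $\V_K$. Its rank is at most $\sum_\nu c_{P_\nu}(K)\le S h_K(1-\eta)^t<h_K$ by \eqref{eq:size-cond}. Therefore its kernel $Z=\bigcap_\nu Z_{P_\nu}(K)$ satisfies $\dim Z>\dim\V_K-h_K=\dim\C_{\le K}$. Moreover $\C_{\le K}\subseteq Z$ by the absorption argument above, now using $\deg P_\nu+K+d\le D$. Hence there is $f\in Z\setminus\C_{\le K}$; since $f\in\V_K$, this means $f\notin\C_D$.

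To make $f$ multilinear, replace it by $\ML(f)$. By Fact~\ref{fact:boolean}, $f-\ML(f)\in\C_{\le K}\subseteq Z$, because $\Bool\subseteq\Q$ and the degrees stay at most $K\le D$. So $\ML(f)$ is still in $Z$ and still not in $\C_D$.

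The main obstacle is the degree bookkeeping: the budget $K+d+(p-1)dt\le D$ must cover every invocation of Lemma~\ref{lem:onestep} (in particular the last step, $j=t-1$), as well as the absorption steps and the bound on $c_1(K)$. Everything else follows directly from the earlier lemmas.
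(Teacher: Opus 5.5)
Your proposal is correct and follows the paper's proof essentially step by step. It uses the same ingredients: $t$ rounds of Lemma~\ref{lem:onestep} combined with Lemma~\ref{lem:hilbert} under the same degree budget, the bound $c_1(K)\le h_K$ by absorption, subadditivity of rank for the stacked map followed by a dimension count, and multilinearization at the end. The one step you leave implicit is that the strict inequality $S h_K(1-\eta)^t<h_K$ needs $h_K>0$. This is exactly where the hypothesis $1\notin\C_D$ is used, since $1\in\V_K\setminus\C_{\le K}$.
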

\begin{proof}
Fix a family $\nu$. The map $T_{1,K}$ annihilates $\C_{\le K}$ by Fact~\ref{fact:multiplication}, so $\rank T_{1,K}\le h_K$. We apply Lemma~\ref{lem:hilbert} and Lemma~\ref{lem:onestep} $t$ times, where in round $j$ we find a  factor $q_{\nu,j}\in\Span_{\Fp}\{g_{\nu,1},\ldots,g_{\nu,m_\nu}\}$ satisfying \eqref{eq:shrinkage}. 
Before round $t$, the product $\prod_{j<t}(1 - q_{\nu,j}^{p-1})$ has degree at most $(t-1)(p-1)d$, so the degree conditions in the lemmas are satisfied due to $(t-1)(p-1)d+K+pd\le D$.
Each round reduces the rank of the resulting map by a factor at most $1-\eta$ by concatenating \eqref{eq:hilbert} and  \eqref{eq:shrinkage}.

All $S$ final maps are defined on $\V_K$ and eliminate $\C_{\leq K}$, so their joint map has rank at most $S(1-\eta)^t h_K<h_K$, where recall $h_K>0$ since $1\notin\C_D$. 
By dimension counting, there is an $f\in \V_K\backslash \C_{\le K}$ in the common kernel. Its multilinearization satisfies~\eqref{eq:relativegoal}.
\end{proof}

\section{Simulation in PC}\label{sec:simulation}
It should be intuitively clear that by applying Theorem~\ref{thm:common} to a proof $\Pi$, one can get an ``approximated proof'' in low-degree $\PC$. Below we give the syntactic details.\smallskip

For $h\in\Fp[x]_{\leq r}$, let $g_h$ be the Boolean indicator of the predicate $h=0$:
\begin{equation}\label{eq:gh}
 g_h=\ML(1-h^{p-1}).
\end{equation}
I.e., for all $x\in \{0,1\}^n$, $g_h(x)=1$ iff $h(x)=0$. 
For $L=\bigvee_{i=1}^{m_L}(h_{L,i}=0)$, 
write $g_{L,i}=g_{h_{L,i}}$.

Given a size-$S$ $\Res(\PC_r/\Fp)$ refutation $\Pi$ of $F$, let 
\begin{equation}\label{eq:d}
d=(p-1)r
\end{equation}
so $\deg g_h \leq d$.
Assume there are parameters $D,K$ such that the hypotheses of Theorem~\ref{thm:common} hold, including $1\notin\C_D=\C_D(\Q(F))$ and $S(1-\eta)^t<1$.
Applying the theorem to the families $(g_{L,i})_{i=1}^{m_L}$ indexed by proof lines, we get the choice of approximators
\begin{equation}\label{eq:approximator}
 P_L=\prod_{j=1}^t(1-q_{L,j}^{p-1})
\quad\text{where}\ \ 
 q_{L,j}\in\Span_{\Fp}\{g_{L,1},\ldots,g_{L,m_L}\}\text{ for all } j\in[t],
\end{equation}
along with an $f\in\V_K$ such that
\begin{equation}\label{eq:f-errors}
 fP_Lg_{L,i}\in\C_D
 \quad\text{for every }L,i.
\end{equation}
Denote $\Delta=t(p-1)d$, then $\deg P_L \le \Delta$.
 
For the next proof it is convenient to have an expanded form of \eqref{eq:approximator}. 
Assume
\begin{equation}\label{eq:qLj}
 q_{L,j}=\sum_i c_{L,j,i}g_{L,i}.
\end{equation}
Then for $P_L$ from \eqref{eq:approximator}, we have expression
\begin{equation}\label{eq:prefix}
1 - P_L=\sum_i U_{L,i}\cdot g_{L,i}\quad \text{where}\quad \deg U_{L,i}\le (p-2)d+(t-1)(p-1)d=\Delta-d.
\end{equation}
Here the polynomials $U_{L,i}$ can be chosen by telescoping the product in \eqref{eq:approximator}; their expressions are unimportant.

\begin{proposition}[PC simulation]\label{prop:simulation}
Let $\Pi$, $f$, $\{P_L\}$ be as above, where $\deg f\le K$. 
If
\begin{equation}\label{eq:simdegree}
 K+\max\{2\Delta+2d,\Delta+w(F)\}\le D,
\end{equation}
then
\[
 fP_L\in\C_D\ \ \text{for every line $L$ of $\Pi$.}
\] 
In particular, $f\in\C_D$.
\end{proposition}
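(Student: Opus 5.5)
Induction along the refutation $\Pi$ in topological order: for every line $L$ we show $fP_L\in\C_D$. For the empty clause, $m_L=0$, all $q_{L,j}=0$ and $P_L=1$, so the final line gives $f\in\C_D$. Two tools are used repeatedly. First, by \eqref{eq:prefix},
\[
fP_L \equiv f - \sum_i U_{L,i}\,f g_{L,i} \pmod{\C_D},
\]
so it suffices to control the products $fg_{L,i}$. Second, \eqref{eq:f-errors} says $fP_Lg_{L,i}\in\C_D$, and multiplying it by polynomials of degree up to $D-K-\Delta-d$ keeps us in $\C_D$ (Fact~\ref{fact:multiplication}). The plan is to show, for each line $L$, that
\[
f\cdot \widehat{L}\ \text{``}\in\C_D\text{''}
\]
in the relaxed sense
\[
f\prod_i(1-g_{L,i})\equiv fP_L \pmod{\C_D}
\]
up to error terms of the form $(\cdot)\, fP_{L'}g_{L',i}$. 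Concretely, we carry $fP_L$ from premises to conclusion.

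\textbf{Axioms.} For $(0=0)$ we have $g_0=1$, and with one predicate $q_{L,j}=c_j$, so $P_L$ is a product of constants $1-c_j^{p-1}$. Then either $P_L=0$, or $P_L=1$ and $fP_L=f\,g_{L,1}P_L\in\C_D$ by \eqref{eq:f-errors}. For $(x_i=0)\vee(x_i-1=0)$ we have $g_{L,1}+g_{L,2}=1$ modulo Bool. Hence
\[
fP_L=fP_L(g_{L,1}+g_{L,2})+fP_L(1-g_{L,1}-g_{L,2}),
\]
where the first part lies in $\C_D$ by \eqref{eq:f-errors} and the second is a multiple of Boolean axioms of degree within budget. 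For a clause $J\in F$ with polynomial $Q_J$, write $fP_L=fP_L\bigl(1-\prod(1-g_{L,i})\bigr)+fP_L\prod(1-g_{L,i})$. Expanding the first term as a combination of $fP_Lg_{L,i}$ times polynomials of degree $\le w$ puts it in $\C_D$. The second term is $fP_LQ_J$ modulo Bool, and lies in $\C_D$ since $\deg(fP_L)+w\le K+\Delta+w\le D$.

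\textbf{Rules.} The general principle is this. If every Boolean point satisfying all premises satisfies the conclusion $C$, we want to deduce $fP_C\in\C_D$ from $fP_{A_k}\in\C_D$. Each rule is a local syntactic identity, so we express $fP_C$ as:
\begin{itemize}
\item combinations of $fP_Cg_{C,i}$, handled by \eqref{eq:f-errors};
\item plus $fP_C\prod_i(1-g_{C,i})$.
\end{itemize}
The latter is the key term. For weakening, contraction and deletion of a false constant, the predicates of $C$ contain (or equal, up to a constant $g=0$) those of the premise $A$. In that case $\prod_{i\in C}(1-g_{C,i})$ is a multiple of $\prod_{i\in A}(1-g_{A,i})$. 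We then rewrite $\prod_{i\in A}(1-g_{A,i})=P_A-(P_A-\prod(1-g_{A,i}))$, where the difference expands via \eqref{eq:prefix} into terms divisible by some $g_{A,i}$ times $P_A$-related factors. Such terms are again killed by \eqref{eq:f-errors} after multiplying by $P_C$-factors. The degree cost is $K+2\Delta+2d$, which matches \eqref{eq:simdegree}. For linear combination and multiplication, the relevant identity is: on Boolean points, if $g_f=g_g=1$ then $g_{\alpha f+\beta g}=1$, and if $g_f=1$ then $g_{fg}=1$. These are polynomial identities modulo Bool of degree $O(d)$, e.g.
\[
g_{fg}\,g_f\equiv g_f,\qquad (1-g_{\alpha f+\beta g})\,g_fg_g\equiv0.
\]
They let us rewrite $(1-g_{C,\text{new}})$ as $(1-g_{A,f})+(1-g_{B,g})$ times a bounded factor, splitting the key term into pieces divisible by $\prod_{i\in A}(1-g_{A,i})$ or by $\prod_{i\in B}(1-g_{B,i})$, each handled as in weakening.

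\textbf{Main obstacle.} The delicate step is the replacement of $f\prod(1-g_{A,i})$ by $fP_A$ (whose membership is the inductive hypothesis) while staying within degree $D$. This must use that $f\bigl(P_A-\prod(1-g_{A,i})\bigr)P_A$ lies in $\C_D$ via \eqref{eq:f-errors}, together with $P_A^2\equiv P_A$ modulo Bool, at cost $2\Delta+2d$. I would first verify the weakening case carefully with explicit degree bookkeeping. The remaining rules then reduce to it through the Boolean identities above.
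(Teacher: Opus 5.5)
Your axiom cases are correct and essentially the paper's. The treatment of the inference rules, however, has a genuine gap, even though each step is semantically sound. It is routed through the exact falsity polynomial $\widehat C=\prod_i(1-g_{C,i})$ of a proof line, whose degree is up to $d\,m_C$. Lines of $\Res(\PC_r/\Fp)$ have no bound on the number $m_C$ of disjuncts, while the budget \eqref{eq:simdegree} is independent of $m_C$; that independence is the whole reason for replacing $\widehat L$ by the degree-$\Delta$ approximator $P_L$.

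Already your first split $fP_C=\sum_i\bigl(\prod_{j<i}(1-g_{C,j})\bigr)fP_Cg_{C,i}+fP_C\widehat C$ uses coefficients of unbounded degree. Worse, the ``key term'' cannot in general lie in $\C_D\subseteq\V_D$ at all. For $C=\bigvee_{i\in I}(x_i=0)$ with $|I|>D$, one has $\widehat C=\prod_{i\in I}x_i$ and $P_C\widehat C\equiv\widehat C$ modulo $\Bool$. Hence $fP_C\widehat C\equiv f\prod_{i\in I}x_i$, whose multilinear degree exceeds $D$ unless $f$ vanishes on $\{x_I=1\}$. The same problem affects $\widehat{C\setminus A}\,P_C\,(fP_A)$ and the term $f(P_A-\widehat A)P_A$ in your ``main obstacle''. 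So the cost $K+2\Delta+2d$ you state does not follow from this decomposition. Your Tool 1 is also not usable as stated: $fg_{L,i}$ need not lie in $\C_D$, only $fP_Lg_{L,i}$ does.

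The missing idea is never to expand $\widehat{\,\cdot\,}$ of a proof line. Instead, multiply the \emph{premise's} telescoped identity \eqref{eq:prefix} by the \emph{conclusion's} $fP_C$. For weakening, contraction and deletion this gives $fP_C=P_C\,(fP_A)+\sum_iU_{A,i}\,(fg_{A,i}P_C)$. Here every $g_{A,i}$ is also an indicator of $C$, or is $0$ for a deleted false constant. So all terms lie in $\C_D$ in degree $K+2\Delta$, however many disjuncts there are.

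Now take a premise $A_0\vee(h=0)$ whose conclusion $C$ contains $A_0$. Multiply the same identity by $(1-g_h)fP_C$ and use $g_h(1-g_h)\equiv0$ modulo $\Bool$; this yields $(1-g_h)fP_C\in\C_D$ in degree $K+2\Delta+d$. Your Boolean identities then finish linear combination and multiplication through the short decompositions $fP_C=(1-g_{h_1})fP_C+g_{h_1}(1-g_{h_2})fP_C+g_{h_1}g_{h_2}(1-g_{h_3})fP_C+g_{h_1}g_{h_2}g_{h_3}fP_C$ and $fP_C=(1-g_h)fP_C+g_h(1-g_{hh'})fP_C+g_hg_{hh'}fP_C$. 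All of these stay within degree $K+2\Delta+2d$.
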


\begin{proof}
We proceed in topological order on $\Pi$, appending local derivations using the conclusions from previous degree-$\leq D$ derivations. 
By Fact~\ref{fact:multiplication}, it suffices to bound the degree of each such local derivation, henceforth referred to as \emph{displayed degrees}.

\emph{Initial clauses.} Let $J=\ell_1\vee\cdots\vee\ell_s\in F$, and let $L$ be the corresponding initial proof line
\[
L=\bigvee_{i=1}^s (1-\ell_i=0).
\]
For each $i$, the indicator of the equation $1-\ell_i=0$ is $g_{1-\ell_i}=\ell_i$, so~\eqref{eq:f-errors} gives $f\ell_iP_L\in\C_D$.
On the other hand, the falsity polynomial of $J$ is $Q_J=\prod_{i=1}^s(1-\ell_i)\in\Q(F)$. 
Using the telescoping identity $1= Q_J+ \sum_{i=1}^s \left(\prod_{j<i}(1-\ell_j)\right)\ell_i$, 
we obtain 
\begin{equation}\label{eq:syntactic-initial}
f P_L = Q_J f P_L+ \sum_{i=1}^s \left( \prod_{j<i} (1-\ell_j) \right) f \ell_i P_L.
\end{equation}
On the RHS, the displayed degree is at most $K+\Delta+s\le K+\Delta+w(F)$, where the first term follows from the input axiom $Q_J$, and every summand in the sum follows from~\eqref{eq:f-errors} by Fact~\ref{fact:multiplication}. So the LHS is in $\C_D$ by Fact~\ref{fact:multiplication}. 

For the Boolean axiom $L=(x_i=0)\vee(x_i-1=0)$, the indicators are $1-x_i$ and $x_i$, so the two error polynomials sum to $fP_L$ in degree $K+\Delta+1$, showing $fP_L\in\C_D$.

For the axiom $L=(0=0)$, its indicator is $1$, so $fP_L\in\C_D$ is already an error polynomial.
\smallskip

\emph{Weakening, contraction, and deletion of a false constant.} 
Let $L_0$ be the premise and $L$ the conclusion. Using~\eqref{eq:prefix} for $L_0$, we have
\begin{equation}\label{eq:syntactic-weakening}
 f P_L  =  P_L(fP_{L_0}) +  \sum_i U_{L_0,i} ( f g_{L_0,i} P_L ),
\end{equation}
with displayed degree $\leq K+2\Delta$ regardless of the number of disjuncts.
This shows $fP_L\in\C_D$ similarly as in \eqref{eq:syntactic-initial}, except we use the inductive hypothesis $f P_{L_0}\in\C_D$ on the first term.

\emph{A useful claim.}
We will repeatedly encounter cases where $ L_1:\ L_0 \vee (h=0)$ is a premise, and the derived line $L$ contains every equation in $L_0$. For such a case, we show 
\begin{equation}\label{eq:one-false}
 (1 - g_h)f P_L\in\C_{D}.
\end{equation}
Let $I_0$ be the index set of the equations in $L_0$. By~\eqref{eq:prefix} applied to $L_1$, then multiplied by $f P_L$,
\[
 f P_L - U_{L_1,h} g_h f P_L =  P_{L_1} f P_L + \sum_{i\in I_0} U_{L_1,i} (f g_{L_1,i} P_L)
\]
On the RHS, the first summand belongs to $\C_D$ by induction, and for every $i\in I_0$, the corresponding equation occurs in $L$, so $f g_{L_1,i} P_L \in \C_D$ by~\eqref{eq:f-errors}. 
Further multiplying this identity by $1-g_h$ and adding $U_{L_1,h} f P_L g_h(1-g_h)$ gives \eqref{eq:one-false}, with displayed degree at most $K+2\Delta+d$.
\smallskip

\emph{Linear combination.} 
Suppose 
\begin{equation}\label{eq:linear-combination}
 \frac{L_1:\ L_A\vee(h_1=0),\quad L_2:\ L_B\vee(h_2=0) }
      {L:\ L_A\vee L_B\vee(\alpha h_1+\beta h_2=0)}
\end{equation}
Denote $h_3=\alpha h_1+\beta h_2$, then 
\begin{equation}\label{eq:linear-combination-sim}
 f P_L = (1 - g_{h_1}) f P_L+ g_{h_1}(1- g_{h_2})f P_L + g_{h_1} g_{h_2}(1- g_{h_3})f P_L + g_{h_1}g_{h_2}g_{h_3} f P_L.
\end{equation}
Both $(1 - g_{h_1})f P_L$ and $(1-g_{h_2})f P_L$ are in $\C_D$ by \eqref{eq:one-false}, the third term vanishes on the Boolean cube since $h_1=h_2=0$ implies $\alpha h_1+\beta h_2=0$, and the last term is a multiple of $ f g_{h_3} P_L\in\C_D$. 
The displayed degree is at most $K+\Delta+3d\le K+2\Delta+2d$ as $d\le \Delta$.
\smallskip

\emph{Multiplication.}
Suppose the inference is from $L_1=L_0\vee(h=0)$ to $L=L_0\vee(hh'=0)$, where $\deg(hh')\le r$. Then
\begin{equation}\label{eq:multiplication-sim}
 f P_L=(1 - g_h) f P_L+ g_h ( 1- g_{hh'} ) f P_L + g_h g_{hh'} f P_L.
\end{equation}
Using \eqref{eq:one-false}, the first term is in $\C_D$; the middle summand vanishes on the Boolean cube because $h=0$ implies $hh'=0$, and the last is a multiple of $ g_{hh'} f P_L\in\C_D$. The RHS has displayed degree at most $K+\Delta+2d$.

Every step therefore fits~\eqref{eq:simdegree}. The final line has $P_\varnothing=1$, so the simulation yields $f\in\C_D$.
\end{proof}

\begin{remark}[Polynomial-clause inputs]
Proposition~\ref{prop:simulation} holds for affine systems in Remark~\ref{rmk:poly-clause} too, by replacing $w(F)$ with $W(F)$. 
Namely, only initial lines $F_i=\bigvee_{j=1}^{w_i}(H_{i,j}=0)$ need extra argument.
Since $g_{F_i,j}=\ML(1-H_{i,j}^{p-1})$ and $Q_{F_i}^{p-1}\equiv\prod_j(1-g_{F_i,j})\pmod{\Bool}$, telescoping the latter gives
\begin{equation}\label{eq:affine-remark}
fP_{F_i}\equiv Q_{F_i}^{p-1}fP_{F_i} + \sum_{j=1}^{w_i} \left(\prod_{b<j}(1-g_{F_i,b}) \right) fP_{F_i}g_{F_i,j} \mod{\Bool}.
\end{equation}
On the RHS, the first term follows from the axiom $Q_{F_i}$, and each summand follows from $fP_{F_i}g_{F_i,j}\in\C_D$ by~\eqref{eq:f-errors}.
Facts~\ref{fact:boolean} and~\ref{fact:multiplication} therefore derive $fP_{F_i}$, with the displayed degree of \eqref{eq:affine-remark} being at most $K+\Delta+W(F)$.
\end{remark}

\section{The degree--size relation}\label{sec:bounds}
We now put the parameters together. 
PC refutes $\Q(F)$ in degree $n$ for any unsatisfiable CNF, and in degree $n+1$ for multilinear polynomials as in Remark~\ref{rmk:poly-clause}.\footnote{By algebraic brute force. I.e., for each assignment $a$, choose an axiom $Q$ with $Q(a)\ne0$, multiply $Q$ successively by the literals in $a$ and multilinearizing, we derive $Q(a)\chi_a$ in degree $n+1$ where $\chi_a$ is the indicator polynomial. Scaling and summing over $a$
derives $1$.} 
So $\Q(F)\not\vdash_D1$ implies $D\le n$.

\begin{proof}[Proof of Theorem~\ref{thm:quantitative}]
Let $d=(p-1)r$. Choose
\begin{equation}\label{eq:K-t}
 K=\left\lfloor\frac D4\right\rfloor,\qquad
 t=\left\lfloor\frac{D}{8(p-1)d}\right\rfloor,\qquad
 \Delta=(p-1)dt \le \frac{D}{8}.
\end{equation}
From the assumption $D\ge\max\{32(p-1)^2r,\,8w(F)\}$, we have $K\ge 2d$. Also, since $K\le \frac n3$, $N_n(K)=\sum_{j=0}^K \binom{n}{j}\le2\binom nK$ by comparing consecutive binomials. Thus
\begin{equation}\label{eq:NnK-EST}
 \frac{N_n(K-d)}{N_n(K)}
 \ge\frac{\binom n{K-d}}{2\binom nK} = 
 \frac12\prod_{j=0}^{d-1}\frac{K-j}{n-K+j+1} \stackrel{(\text{by }K\geq 2d)}{\ge}\frac12\left(\frac{K}{2n}\right)^d \ge\frac12\left(\frac{D}{16n}\right)^d.
\end{equation}
For $\eta:=(1-\frac1p)\frac{N_n(K-d)}{N_n(K)}$, \eqref{eq:NnK-EST} gives 
\begin{equation}\label{eq:eta-EST}
 \eta t\ge\frac{D}{32pd}\left(\frac{D}{16n}\right)^d.
\end{equation}

If $\ln S<\eta t$, then $S(1-\eta)^t\le Se^{-\eta t}<1$, so by Theorem~\ref{thm:common} applied to the indicator families of the proof lines, we get the approximators $\{P_L\}_L$ and $f\notin\C_D$ such that all the $f$-multiplied error-witness polynomials belong to $\C_D$. 

Since $K+2\Delta+2d\le\frac{9D}{16}$ and $K+\Delta+w\le\frac D2$, we can apply Proposition~\ref{prop:simulation} to get $f\in\C_D$, a contradiction. So $\ln S\ge\eta t$, proving~\eqref{eq:main}.
\end{proof}

\section{Applications}\label{sec:applications}
We include a few applications of Theorem~\ref{thm:quantitative}. 
%
The following simple fact will be used.
\begin{fact}[Derivation by brute force]\label{fact:brute-force}
Let $F$ be a CNF, $C=\bigvee_{j=1}^m (h_j=0)$ where $\deg h_j\le r$, and $F$ and $C$ altogether have at most $t$ variables. If $F\models C$, then $C$ has a $\Res(\PC_r/\Fp)$ derivation from $F$ in $O(2^t t)$ size.
\end{fact}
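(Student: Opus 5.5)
The plan is to derive $C$ from $F$ by first building, inside $\Res(\PC_r/\Fp)$, a complete case split over the relevant variables, and then closing each branch. Let $X$ be the set of at most $t$ variables occurring in $F$ or $C$; we only use the clauses of $F$ on these variables, which is harmless since $F$ has at most $t$ variables altogether.

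\textbf{Step 1 (decision-tree disjunctions).} For every partial assignment $\rho$ to a prefix $x_1,\dots,x_k$ of $X$, we derive the line
\[ C\vee\bigvee_{i\le k}(x_i-(1-\rho_i)=0), \]
i.e.\ ``$C$, or $\rho$ is violated''. Note that $x_i=1-\rho_i$ says exactly that $x_i\ne\rho_i$. We proceed bottom-up, from full assignments to shorter ones.

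\textbf{Step 2 (full assignments).} When $\rho$ is total on $X$, there are two cases.
\begin{itemize}
\item If $\rho$ falsifies some clause $J\in F$, we weaken the initial line of $J$. Each literal equation $1-\ell=0$ of $J$ is false under $\rho$, so the literal is also false under $\rho$. Hence the equation $1-\ell=0$ is literally one of the disjuncts $(x_i-(1-\rho_i)=0)$, possibly up to a scalar $-1$. A scalar is fixed by a single multiplication-by-constant step, which is a degree-$0$ multiplication or a linear combination with $0=0$. Syntactic weakening then adds the remaining disjuncts of the target line.
\item If $\rho$ satisfies $F$, then $F\models C$ gives some $h_j$ with $h_j(\rho)=0$. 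We must derive $(h_j=0)\vee\bigvee_i(x_i-(1-\rho_i)=0)$ syntactically. We start from $(h_j=0)$-free material: we write $h_j=h_j(\rho)+\sum_i (x_i-\rho_i)u_i$ with $\deg u_i< \deg h_j$, which is a Taylor-type expansion along $\rho$. Multilinearity issues are handled through the Boolean axioms $(x_i=0)\vee(x_i-1=0)$. For each $i$, the Boolean axiom rewritten relative to $\rho_i$ reads $(x_i-\rho_i=0)\vee(x_i-(1-\rho_i)=0)$. Multiplying its first disjunct by $u_i$ gives $((x_i-\rho_i)u_i=0)\vee(x_i-(1-\rho_i)=0)$. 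Summing these $\le t$ lines with the linear combination rule yields $(\sum_i(x_i-\rho_i)u_i=0)\vee\bigvee_i(x_i-(1-\rho_i)=0)$, which is exactly $(h_j=0)\vee\cdots$ since $h_j(\rho)=0$. We then weaken by the other disjuncts of $C$.
\end{itemize}
Each leaf costs $O(t)$ lines.

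\textbf{Step 3 (merging).} The lines for $\rho\cup\{x_k=0\}$ and $\rho\cup\{x_k=1\}$ contain the disjuncts $(x_k-1=0)$ and $(x_k=0)$ respectively. A linear combination of the two lines, with coefficients chosen to cancel the variable, produces the false constant $(1=0)$, up to sign. Deleting it gives the line for $\rho$. Repeating up the binary tree of depth $t$ ends at $\rho=\emptyset$, i.e.\ $C$. This step has one subtlety: the rule combines only one chosen equation from each premise, and the remaining disjuncts merge by contraction, so the shared $C$-part and the prefix disjuncts are contracted. There are $2^t$ leaves, each costing $O(t)$, plus $2^t$ merges, for total size $O(2^t t)$.

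\textbf{Main obstacle.} The delicate step is the satisfying-leaf case. It must stay within degree $r$ and use only syntactic rules. The decomposition $h_j-h_j(\rho)=\sum_i(x_i-\rho_i)u_i$ is done monomial by monomial, e.g.\ by substituting one variable at a time, so every $(x_i-\rho_i)u_i$ has degree $\le\deg h_j\le r$. Since the multiplication rule requires the product to stay within degree $r$, this bound is the thing to verify carefully.
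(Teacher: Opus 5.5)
Your proof is correct and follows essentially the same route as the paper: a case split over all $2^t$ total assignments, weakening a falsified clause at falsifying leaves, and at satisfying leaves writing $h=\sum_i(x_i-\rho_i)u_i$ with $\deg u_i\le r-1$ and combining the $u_i$-multiplied Boolean axioms. It then resolves up the binary assignment tree by linear combination to a false constant, deletion, and contraction. Your sign fix for positive literals, where the initial disjunct $(1-x_i=0)$ differs from $(x_i-1=0)$ by a scalar, is a detail the paper's one-word ``weakening'' glosses over; it costs only $O(t)$ extra steps per leaf and does not change the $O(2^t t)$ bound.
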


\begin{proof}
For $\alpha\in\{0,1\}^t$, let $D_\alpha=\bigvee_{i=1}^t(x_i - (1-\alpha_i) =0 )$ be the clause falsified exactly by $\alpha$.

If $\alpha$ falsifies a clause of $F$, weakening derives $D_\alpha\vee C$. 
Otherwise, $\alpha\models F$, so some $(h=0)\in C$ satisfies $h(\alpha)=0$, and we can write $h=\sum_{i=1}^t (x_i-\alpha_i)q_i$ where $\deg q_i\le r-1$. 
By weakening the Boolean axioms we can get $D_\alpha\vee(x_i-\alpha_i=0)$ for $i\in[t]$. 
Then a multiplication by $q_i$, followed by linear combination and contraction, derives $D_\alpha\vee(h=0)$ in degree at most $r$, and weakening (where many disjuncts can be added in one step) gives $D_\alpha\vee C$. Finally, resolving the clauses $D_\alpha$ along the complete binary assignment tree derives $C$. Together this uses $O(2^t t)$ steps.
\end{proof}

\subsection{Formula families}\label{sec:app-formulas}

Throughout, we use $n$ for the number of variables, and $N$ for a natural parameter of the family.
\smallskip

\textbf{Random CNFs.}
$\mathcal F(n,w,\Delta n)$ consists of $\lceil\Delta n\rceil$ clauses chosen independently and uniformly from the $2^w\binom nw$ clauses of width $w$, with repetitions. For $\Delta\ge2^w$, the formula is unsatisfiable asymptotically almost surely by a straightforward union bound.
\smallskip

\textbf{Tseitin formulas modulo $q$.}
Fix an orientation of a connected graph $G=(V,E)$ and charges $b\colon V\to\mathbb F_q$ with $\sum_v b(v)\ne0$. 
The formula takes for each edge $e$ a set of Boolean variables $\{x_{e,a}\}_{a\in\mathbb F_q}$, and axioms include the unique-value clauses: $\vee_{a}x_{e,a}$ and $\neg x_{e,a}\vee \neg x_{e,b}$ for $a\neq b$. 
At each vertex, axioms also impose the flow-conservation property $\sum_{e\in\mathrm{out}(v)}y_e-\sum_{e\in\mathrm{in}(v)}y_e=b(v)$ where $y_e$ denotes $\sum_{a} a x_{e,a}$.
The resulting CNF $\mathrm{Tseitin}_q(G,b)$ has $q|E|$ variables and width at most the maximum of $q$ and the maximum degree of $G$.
\smallskip

\textbf{Pigeonhole principles.}
For a bipartite graph $G=(U\mathbin{\dot\cup}V,E)$, introduce a variable $X_{u,v}$ for each edge. The clauses of $\PHP(G)$ are
\begin{equation}\label{eq:PHP}
 \bigvee_{v\in N_G(u)}X_{u,v}\quad(u\in U),
 \qquad
 \neg X_{u,v}\vee\neg X_{u',v}\quad(u\ne u').
\end{equation}
The functional version $\FPHP(G)$ adds $\neg X_{u,v}\vee\neg X_{u,v'}$ for $v\ne v'$; 
the onto version $\ontoPHP(G)$ instead adds $\bigvee_{u\in N_G(v)}X_{u,v}$ for every hole $v$. 
The usual $\PHP$ formulas use the complete (bipartite) graph $G=K_{N+1,N}$. 

For $A\subseteq U$, let $\partial_G A$ be the \emph{unique neighbors set}, or \emph{boundary}, of $A$ which consists of the holes with exactly one neighbor in $A$. 
The graph is an \emph{$(s,\delta)$ boundary expander} if $|\partial_G A|\ge\delta|A|$ whenever $|A|\le s$. Here, think of $\delta>0$ as a constant. 

It is well known that such boundary expanders exist for all $N$ and $m=O(N)$. 
For instance, by a standard first-moment calculation (see, e.g.,~\cite{HLW06}), for fixed $\Delta\ge5$ a random bipartite multigraph with left degrees $\Delta$ and right degrees in $\{\lfloor\Delta m/N\rfloor,\lceil\Delta m/N\rceil\}$ satisfies that $|N(A)|\ge(\Delta-2)|A|$ for all $|A|\le\gamma N$, $\gamma>0$ a constant.
Every hole in $N(A)\setminus\partial A$ receives at least 2 edges leaving $A$, so $|\partial A|\ge2|N(A)|-\Delta|A|\ge(\Delta-4)|A|\ge|A|$. Replacing parallel edges by single edges keeps the degrees bounded and does not change neighborhoods.
\smallskip

\textbf{Colouring.}
The $3$-colouring CNF of an $N$-vertex graph $G=(V,E)$ has variables $X_{u,c}$ for $u\in V,c\in\{1,2,3\}$, and clauses requiring exactly one colour per vertex and different colours at adjacent vertices. It has $3N$ variables and width 3.
\smallskip

\textbf{Bit pigeonhole principle.}
For $N=2^\ell$ and $m>N$, $\bitPHP$ has variables $x_{i,t}$, $i\in[m]$, $t\in[\ell]$. 
Its clause-falsity polynomials are the Booleanity polynomials and 
\[\chi_{i,z}\chi_{j,z}\quad \text{for $i<j$ and $z\in\{0,1\}^\ell$},\] 
where $\chi_{i,z}=\prod_{t:\ z_t=1}x_{i,t}\prod_{t:\ z_t=0}(1-x_{i,t})$. We call the corresponding clauses the collision clauses.\smallskip

\begin{theorem}[Exponential lower bounds]\label{thm:applications}
Fix a prime $p$ and an integer $r\ge1$. Each of the following requires proof size $\exp(\Omega_{p,r}(N))$ in $\Res(\PC_r/\Fp)$.

\begin{enumerate}[label=(\alph*)]
\item 
$F\sim\mathcal F(N,w,\Delta N)$, for fixed $w\ge3$ and fixed $\Delta\ge 2^w$, asymptotically almost surely. \label{item:random-cnf} 

\item 
$\mathrm{Tseitin}_q(G,b)$, for a fixed prime $q\ne p$ and a family of fixed-degree regular edge expanders on $N$ vertices. \label{item:tseitin} 

\item
The $3$-colouring CNF of $G(N,\lambda/N)$ or of a uniform $\lambda$-regular graph on $N$ vertices, for fixed $\lambda\ge6$, asymptotically almost surely. \label{item:colouring} 

\item
$\mathrm{PHP}(G)$, $\FPHP(G)$, and $\ontoPHP(G)$, with $N+1$ pigeons and $N$ holes: both the complete graphs and bounded-maximum-degree $(s,\delta)$ boundary expanders with $s=\Omega(N)$, fixed $\delta\ge1$, and no isolated holes. \label{item:php} 

\item
Bit-PHP with $m=O(N)$ pigeons and $N=2^\ell$ holes. \label{item:bit-php}
\end{enumerate}
\end{theorem}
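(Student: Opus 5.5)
The plan is to derive every item from Theorem~\ref{thm:quantitative} (or Remark~\ref{rmk:poly-clause} where the encoding is not a constant-width CNF). For each family we need a PC degree lower bound $\Q(F)\not\vdash_D 1$ over $\Fp$ with $D=\Omega(n)$ and $n=\Theta(N)$. Since $p$ and $r$ are fixed, the bound \eqref{eq:main} then reads $\ln S\ge\Omega_{p,r}(D)\cdot\Omega(1)^{(p-1)r}=\Omega_{p,r}(N)$. For constant width, the hypothesis $D\ge\max\{32(p-1)^2r,8w(F)\}$ holds automatically for large $N$. So the work is to import known linear degree lower bounds and, where necessary, reduce to a constant-width instance on $O(N)$ variables whose small refutations would transfer.

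\textbf{Items with known linear PC degree bounds.} For \ref{item:random-cnf} we cite the linear PC degree lower bounds for random $w$-CNFs over any field (Ben-Sasson--Impagliazzo, Alekhnovich--Razborov), with $n=N$. For \ref{item:tseitin} we cite linear degree bounds for mod-$q$ Tseitin on expanders in characteristic $p\ne q$, via the Fourier/Grigoriev-style arguments; bounded degree gives width $\max\{q,\deg G\}$ and $n=q|E|=O(N)$. For \ref{item:colouring} we cite the linear degree bounds of Romero--Tunçel (and Conneryd et al.) for 3-colouring sparse random graphs over all fields. For \ref{item:php} on bounded-degree boundary expanders, all three variants have constant width and $O(N)$ variables, and Razborov/Alekhnovich--Razborov-type arguments give linear degree (including onto and functional) via the boundary expansion with $s=\Omega(N)$.

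\textbf{Complete-graph PHP.} The complete-graph PHP has unbounded width and $n=\Theta(N^2)$, so we reduce it to the expander case. Fix a bounded-degree boundary expander $G\subseteq K_{N+1,N}$ and substitute $X_{u,v}=0$ for non-edges. This is a restriction, and restrictions map a $\Res(\PC_r/\Fp)$ refutation of $\PHP(K_{N+1,N})$ to one of $\PHP(G)$ of no larger size: substituting constants keeps degrees at most $r$ and preserves each rule, and false constant equations are deleted. The same works for FPHP and onto-PHP, noting that for onto the hole clauses restrict to those of $G$. This uses only the absence of isolated holes.

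\textbf{Bit-PHP.} Bit-PHP has $\Theta(m\ell)$ variables and collision clauses of width $2\ell$, so neither the width nor the degree is directly constant. We first restrict to an expander pattern. For a bounded-degree expander $G$ with left degree $\Delta$ and $m=O(N)$, each pigeon $i$ may only take the $\Delta$ values $z\in N_G(i)$. Encoding those values by $O(\log\Delta)$ fresh bits per pigeon gives a substitution of $x_{i,t}$ by constant-degree polynomials. This maps a $\Res(\PC_r/\Fp)$ proof of bit-PHP to a $\Res(\PC_{O(r)}/\Fp)$ proof of a constant-width formula, and the blow-up per line is polynomial because each $x_{i,t}$ becomes a degree-$O(\log\Delta)$ polynomial. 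Alternatively, Fact~\ref{fact:brute-force} derives the new local clauses in constant size. We then need a linear degree lower bound for the resulting $\PHP(G)$-like system, using the fact that a degree-$D$ PC refutation of it yields one of $\PHP(G)$ of degree $O(D)$.

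I expect this bit-PHP step to be the main obstacle: making the substitution syntactically sound for $\Res(\PC_r)$ with only constant degree increase, and pinning down the degree bound over $\Fp$. If the substitution route stalls, the fallback is to prove a linear degree bound for bit-PHP directly via a pseudo-expectation/$R$-operator built from the boundary expander, and then apply Remark~\ref{rmk:poly-clause}.
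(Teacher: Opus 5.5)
Your treatment of (a)--(d) matches the paper's. You import linear PC degree bounds on $\Theta(N)$ variables, apply Theorem~\ref{thm:quantitative}, and obtain the complete-graph versions by setting non-edge variables to $0$. Two minor points: the functional and onto graph versions on expanders are due to Mik\v{s}a--Nordstr\"om rather than Alekhnovich--Razborov, and the Tseitin and colouring bounds you cite are for other encodings, so you need the easy constant-degree translation into $\Q(F)$.

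The genuine difference is in (e). You pass through an intermediate constant-width CNF $\Phi$ that writes each pigeon's choice among its $\le\Delta$ neighbours in $b=\lceil\log_2\Delta\rceil$ bits. A degree-$b$ substitution turns a $\Res(\PC_r/\Fp)$ refutation of $\bitPHP$ into a $\Res(\PC_{rb}/\Fp)$ refutation of $\Phi$. You then need a separate PC degree lower bound for $\Phi$, lifted from the graph pigeonhole principle. This is sound, and since $b$ is constant it still gives $\exp(\Omega_{p,r}(N))$. One caveat: the natural affine lift $y_{i,k}\mapsto\sum_{v:\,\mathrm{code}_i(v)_k=1}X_{i,v}$ needs the functional clauses to make the images Boolean. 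So it should start from $\FPHP(G)$; for $\PHP(G)$ you would need a non-affine ``first hole'' encoding.

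The paper instead composes your two substitutions into one \emph{affine} map $x_{i,t}\mapsto\sum_{v\in N_G(i),\,v_t=1}X_{i,v}$ going straight into $\FPHP(G)$. This buys the following:
\begin{itemize}
\item Being affine, the map keeps $r$ unchanged and preserves every inference step.
\item The image of each axiom involves at most $2\Delta$ variables, so Fact~\ref{fact:brute-force} derives it in $O(1)$ steps.
\item Item (e) then reduces to the \emph{size} bound already proved in (d), with no new degree lower bound needed.
\end{itemize}
So both obstacles you anticipated disappear.

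Your fallback, by contrast, would not give the stated bound. A degree bound for $\bitPHP$ itself lives on $n=\Theta(N\log N)$ variables. Even with $D=\Theta(N)$, \eqref{eq:main} then yields only $\ln S\ge\Omega_{p,r}(N/(\log N)^{(p-1)r})$.
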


\begin{proof}
For (a), (b), (c), and the bounded-degree graphs in (d), we use linear PC degree lower bounds on $n=\Theta(N)$ variables from the literature and apply Theorem~\ref{thm:quantitative}. The usual $\PHP$ and $\bitPHP$ follow by reductions.
\smallskip

\emph{Random CNFs.}
For fixed $w\ge3$ and fixed $\Delta$, Alekhnovich and Razborov~\cite{AR03} prove that every PC refutation of $F\sim\mathcal F(n,w,\Delta n)$ has degree at least $\varepsilon n$ asymptotically almost surely, where $\varepsilon=\varepsilon(w,\Delta)>0$. 
(See also~\cite{BI10} over characteristic greater than~$2$).
\smallskip

\emph{Tseitin.}
Linear PC degree lower bounds over characteristic different from $q$ for the mod-$q$ Tseitin are proved in \cite{AR03,BGIP01}. The encodings are different from here, but since $q$ and the graph degree are fixed, we can substitute $x_{e,a}$ by the variable that indicates edge $e$ takes value $a$ in their encoding, which will increase degree by at most $q-1$. 
The substitution maps every axiom of $\mathrm{Tseitin}_q(G,b)$ to a constant-degree consequence of the axioms there, so a degree-$D$ PC refutation here yields a degree-$O(D)$ refutation in those encodings. Thus the linear lower bound transfers. 
\smallskip

\emph{Pigeonhole.}
Mik\v{s}a--Nordstr\"om~\cite{MN} prove degree lower bounds of $\delta s/(2b)$ for graph FPHP$(G)$ on such graphs of maximum left degree $b$, and of $\delta s/2$ for $\ontoPHP(G)$. The formula $\PHP(G)$ is a weakening of $\FPHP(G)$, so size lower bounds transfer. 

For the complete graph, choose such a bounded-degree graph $G$. The graph version formula is a restriction of the complete-graph case by restricting all non-edge variables to $0$, which only reduces proof degree and size. So the lower bounds transfer to the complete-graph case.
\smallskip

\emph{Colouring.}
Conneryd, de Rezende, Nordstr\"om, Pang, and Risse~\cite{CdRNPR} prove linear degree for the encoding with polynomials $R_u=\sum_cX_{u,c}-1$, $X_{u,c}X_{u,c'}$ ($c\neq c'$), $X_{u,c}X_{v,c}$ ($uv\in E$), and Booleanity. 
This encoding derives the falsity polynomials of the CNF axioms in degree 3, using $\prod_{c=1}^3(1-X_{u,c})=-R_u+\sum_{c<c'}X_{u,c}X_{u,c'}-X_{u,1}X_{u,2}X_{u,3}$.
Thus the same degree lower bound holds for the CNF.
\smallskip

\emph{Bit pigeonhole.}
Choose a boundary expander $G$ as above with $m$ left vertices (pigeons), $N$ right vertices (holes), bounded left degree $b$, and expansion up to $\Omega(N)$ pigeons; the $\FPHP$ degree bound above is $\Omega(N)$ on $\Theta(N)$ variables. 
Label the holes by $\{0,1\}^\ell$, then the affine substitution
\begin{equation}\label{eq:app-bit-decoder}
 \sigma(x_{i,t}) = \sum_{\substack{v\in N_G(i)\\v_t=1}}X_{i,v}
\end{equation}
turns a refutation of $\bitPHP$ into one of $\FPHP(G)$. It preserves the degree bound of each line, and it preserves deduction steps without increasing proof size.\footnote{Recall that we measure proof size by the number of lines, regardless of the size of each polynomial disjunct.} 
For axioms, the image of a collision clause is semantically implied (over the Boolean cube) from the $\FPHP(G)$ constraints on the corresponding two pigeons, and the image of a Boolean axiom from the $\FPHP(G)$ constraints on that pigeon. 
Each involves at most $2b$ variables, so Fact~\ref{fact:brute-force} derives it in $O_b(1)$ steps in degree 1. Thus a size-$S$ $\bitPHP$ refutation yields a size-$O_b(S)$ refutation of $\mathrm{FPHP}(G)$. 
The $\FPHP$ argument of \ref{item:php}, which uses only $m=O(N)$, completes the proof.
\end{proof}

For $\Res(\oplus_2)$, item \ref{item:bit-php} improves the $\exp(\Omega(N/\log^2N))$ lower bound in~\cite{Braun2026exponential} to $\exp(\Omega(N))$. 
This is nearly tight since tree-like resolution can refute $\bitPHP$ in size $\exp\left(O(N)\right)$~\cite{DGGM24}.

\subsection{Different moduli}
\label{sec:app-lin}

Taking $r=1$ gives the same bounds for $\Res(\mathrm{lin}_{\Fp})$, so Family (b) also separates the characteristics. Previously, Part--Tzameret~\cite{PartTzameret} showed the incomparability for tree-like refutations.

\begin{corollary}\label{cor:moduli}
Let $p\ne q$ be primes and let $G$ belong to the graph family in Theorem~\ref{thm:applications}(b). Then $\mathrm{Tseitin}_q(G,b)$ has size-$O(N)$ $\Res(\mathrm{lin}_{\mathbb F_q})$ refutations. 
In particular, $\Res(\mathrm{lin}_{\Fp})$ and $\Res(\mathrm{lin}_{\mathbb F_q})$ are exponentially incomparable.
\end{corollary}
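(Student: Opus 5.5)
The corollary has two parts: a linear-size upper bound in $\Res(\mathrm{lin}_{\mathbb F_q})$, and a matching exponential lower bound in $\Res(\mathrm{lin}_{\Fp})$. The lower bound is Theorem~\ref{thm:applications}\ref{item:tseitin} with $r=1$: since $q\ne p$ and $G$ is a fixed-degree expander, $\mathrm{Tseitin}_q(G,b)$ needs size $\exp(\Omega(N))$ in $\Res(\PC_1/\Fp)=\Res(\mathrm{lin}_{\Fp})$. Exchanging the roles of $p$ and $q$, and using Tseitin modulo $p$ on the same graphs, gives the other direction of the incomparability. So the substance is the upper bound: a size-$O(N)$ $\Res(\mathrm{lin}_{\mathbb F_q})$ refutation of $\mathrm{Tseitin}_q(G,b)$.

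For the upper bound, work over $\mathbb F_q$ and set $y_e=\sum_a a\,x_{e,a}$, which is affine over $\mathbb F_q$. For each vertex $v$, put $\ell_v=\sum_{e\in\mathrm{out}(v)}y_e-\sum_{e\in\mathrm{in}(v)}y_e-b(v)$, again affine. The first step is to derive the single-equation line $(\ell_v=0)$ from the axioms at $v$ and the unique-value clauses of its incident edges. These involve only $q\deg(v)=O(1)$ variables, and on the Boolean cube they imply $\ell_v=0$, because on assignments satisfying the unique-value clauses $y_e$ is the value of edge $e$. Fact~\ref{fact:brute-force}, with $r=1$ and $\mathbb F_q$ in place of $\Fp$ (its proof is field-independent), then derives $(\ell_v=0)$ in $O(1)$ size, so $O(N)$ in total. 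The fact is stated for a CNF premise with a polynomial-disjunction conclusion, which is exactly our situation; the only care needed is that each axiom clause is read as a line $\bigvee(1-\ell_i=0)$, as in the paper.

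Next, sum the equations $\ell_v=0$ over all $v$ by repeated linear combination, which is $|V|-1$ steps. Each $y_e$ appears once with $+$ and once with $-$, so the sum is the constant equation $-\sum_v b(v)=0$. This constant is nonzero in $\mathbb F_q$ by assumption, so one deletion of a false constant equation yields the empty clause. The total size is $O(|V|+|E|)=O(N)$, since the degree is fixed.

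The main obstacle is confirming that Fact~\ref{fact:brute-force} applies verbatim over $\mathbb F_q$. The delicate point is writing an affine $h$ vanishing at $\alpha$ as $\sum_i (x_i-\alpha_i)q_i$ with constants $q_i$; this holds for affine $h$ over any field, so the fact transfers.
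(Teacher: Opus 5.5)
Your proposal is correct and follows the paper's proof. The lower bound comes from Theorem~\ref{thm:applications}(b) with $r=1$, and the upper bound derives each vertex's flow-conservation equation from its $O(1)$-variable local CNF via Fact~\ref{fact:brute-force} (whose proof is indeed field-independent), then sums the equations to reach the false constant $-\sum_v b(v)=0$; your explicit swap of $p$ and $q$ for the reverse direction of the incomparability is the reading the paper leaves implicit.
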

\begin{proof}
Given the lower bound in Theorem~\ref{thm:applications}(b), we only need to derive the upper bound. Using Fact~\ref{fact:brute-force}, we can use the CNF to derive the flow-conservation equation at each vertex, and their sum over all vertices gives the contradiction $-\sum_vb(v)=0$. 
\end{proof}

\subsection{Resolution over \texorpdfstring{$k$}{k}-DNFs}\label{sec:app-resk}

The lines of $\Res(k)$~\cite{Krajicek01} are disjunctions of Boolean terms of at most $k$ literals. We use the standard syntactic rules
of weakening,
\[
 \frac{A\vee T_1\qquad B\vee T_2}
      {A\vee B\vee(T_1\wedge T_2)}
 \quad (|T_1\cup T_2|\le k)
\]
for $\wedge$-introduction,
$A\vee T\mapsto A\vee T'$ for $T'\subseteq T$ for $\wedge$-elimination, and
\[
 \frac{A\vee T\qquad B\vee\bigvee_{\ell\in T}\neg\ell}
      {A\vee B}
\]
for cut. Terms and disjunctions are written without repetitions.

\begin{fact}\label{fact:resk}
For every CNF $F$, an $S$-node $\Res(k)$ refutation yields a size-$O(kS)$ $\Res(\PC_k/\Fp)$ refutation, for every prime $p$.
\end{fact}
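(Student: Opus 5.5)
The simulation is line by line: each $\Res(k)$ line is translated into a $\Res(\PC_k/\Fp)$ line, and each $\Res(k)$ rule is replaced by a constant number of $\Res(\PC_k/\Fp)$ steps, or at most $O(k)$ of them. Translate a term $T=\ell_1\wedge\cdots\wedge\ell_m$ ($m\le k$) into the single equation $(\prod_{j}(1-\ell_j)\ \text{replaced by}\ 1-\prod_j \ell_j = 0)$. Concretely, write $\pi_T=\prod_{\ell\in T}\ell$, where a literal $x_i$ is the polynomial $x_i$ and $\neg x_i$ is $1-x_i$. Then $T$ is true iff $1-\pi_T=0$, and this equation has degree $m\le k$. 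A $k$-DNF line $\bigvee_T T$ becomes $\bigvee_T(1-\pi_T=0)$. Initial clauses of $F$ are disjunctions of width-one terms, so they become exactly the initial lines used in the paper, with $1-\pi_{\ell}=1-\ell$.

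Next I check each rule. Weakening is syntactic weakening. For $\wedge$-elimination from $A\vee(1-\pi_T=0)$ to $A\vee(1-\pi_{T'}=0)$ with $T'\subseteq T$, I use the identity $1-\pi_{T'}=\pi_{T'}(1-\pi_{T\setminus T'})+(1-\pi_T)$. This requires the line $A\vee(1-\pi_{T\setminus T'}=0)$ or a route to it, so I would instead argue literal by literal. Boolean axioms give $(\ell=0)\vee(1-\ell=0)$. Multiplying the first disjunct by $\pi_{T'}/\ell$ and combining with the premise equation $1-\pi_T$ lets one strip one literal at a time, staying in degree $\le k$ and using $O(1)$ steps per literal. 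Boolean reductions $\ell^2\to\ell$ are needed here; they come from linear combination with multiplied Boolean-axiom lines. The $\wedge$-introduction rule is handled the same way. From $1-\pi_{T_1}=0$ and $1-\pi_{T_2}=0$, multiply the first by $\pi_{T_2}$ (degree $\le|T_1\cup T_2|\le k$ after multilinearization) and add the second to get $1-\pi_{T_1\cup T_2}=0$. The multilinearization is again done with Boolean-axiom lines in $O(k)$ steps.

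The cut rule from $A\vee T$ and $B\vee\bigvee_{\ell\in T}\neg\ell$ is the main step. The negated literals give disjuncts $(1-\bar\ell=0)$, which are equivalent to $(\ell=0)$. I would resolve literal by literal. Starting from $A\vee(1-\pi_T=0)$, multiply by a suitable term so that the equation $\ell=0$ for one $\ell\in T$, combined linearly, removes that disjunct. Iterating over the $m\le k$ literals uses $O(k)$ steps in degree $\le k$. At the end, $(1-\pi_T=0)$ combined with all $\ell=0$ yields the constant equation $1=0$, which is deleted as a false constant, leaving $A\vee B$ after contraction.

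I expect the main obstacle to be the bookkeeping in the cut and $\wedge$ rules: keeping every intermediate polynomial of degree $\le k$ and obtaining multilinear forms syntactically with only $O(k)$ lines. I would settle this first by proving a small lemma. The lemma says that from $A\vee(h=0)$ one can derive $A\vee(\ML(h\ell)=0)$ and $A\vee(\ML(h)=0)$ in $O(1)$ steps when degrees stay $\le k$, using weakened Boolean axioms and contraction. Summing over lines gives size $O(kS)$.
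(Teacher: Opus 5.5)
\textbf{Where the proposal stands.} Your translation and overall plan match the paper's: translate each term $T$ as $(1-\pi_T=0)$, simulate each rule locally, and spend $O(k)$ steps per rule. Your cut step is also essentially the paper's, once the multiplier $\pi_T/\ell$ is placed on each disjunct $(\ell=0)$ so it becomes $(\pi_T=0)$ and cancels against $1-\pi_T$. The gap is in the two $\wedge$-rules, which are where the content of this fact lies.

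\textbf{$\wedge$-elimination.} Your identity is off by a sign; the correct one is $1-\pi_{T'}=(1-\pi_T)-\pi_{T'}(1-\pi_{T\setminus T'})$. The literal-by-literal step you substitute also does not close. Multiplying $(\ell=0)$ by $\pi_{T'}/\ell$ and combining with $1-\pi_T$ produces no constant to delete. Even with the multiplier $\pi_T/\ell$, the result is $A\vee(1-\ell=0)$: you have extracted $\ell$, not removed it from the term.

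The missing ingredient is a case split on a whole monomial. The paper uses the line $(m=0)\vee(m-1=0)$, which is derivable from the Boolean axioms in $O(\deg m)$ steps and degree $\deg m$. Take $m=\pi_{T'}$, multiply its first disjunct by $\pi_{T\setminus T'}$, cancel against $1-\pi_T$, and delete the false constant.

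Your literal route can also be completed. After extracting $A\vee(1-\ell=0)$ from $A\vee(1-\pi_S=0)$, multiply it by $\pi_{S\setminus\{\ell\}}$ and subtract it from $A\vee(1-\pi_S=0)$. After contraction this gives $A\vee(1-\pi_{S\setminus\{\ell\}}=0)$, at $O(1)$ steps per literal.

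\textbf{$\wedge$-introduction.} Multiplying $1-\pi_{T_1}$ by $\pi_{T_2}$ is not a legal step when $T_1\cap T_2\neq\emptyset$. The product has degree $|T_1|+|T_2|$, which can exceed $k$ even though $|T_1\cup T_2|\le k$. The multiplication rule requires the displayed product itself to have degree at most $k$, so multilinearizing afterwards cannot rescue the step. Your proposed multilinearization lemma also only applies when the unreduced product $h\ell$ already has degree at most $k$.

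The paper avoids multilinearization entirely. First $\wedge$-eliminate the literals of $T_1$ from $T_2$. Then use
\[\pi_{T_2\setminus T_1}(1-\pi_{T_1})+(1-\pi_{T_2\setminus T_1})=1-\pi_{T_1\cup T_2},\]
in which every displayed polynomial has degree at most $|T_1\cup T_2|\le k$. With these two repairs each rule costs $O(k)$ steps, and the $O(kS)$ bound follows.
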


\begin{proof}
We can translate a $\Res(k)$ line into one in $\Res(\PC_k/\Fp)$, by representing a Boolean term $T$ by equation $( m_T -1 =0)$, where $m_T$ is the product of its literal truth polynomials. We simulate deductions below. 

Weakening and contraction are immediate.

For other rules, we will use the fact that $B_m: (m=0)\vee(m-1=0)$ has a size-$O(\deg m)$ derivation in degree $\deg m$ from the Boolean axioms (cf. \cite[Proposition~4.1]{Khaniki}).

For $\wedge$-elimination $T\mapsto T'\subseteq T$, derive $B_{m_{T'}}$ and multiply its equation $(m_{T'}=0)$ by $m_T/m_{T'}$. 
Then combining the resulting $(m_T=0)$ with the premise equation $(m_T-1=0)$ gives a false constant, whose deletion leaves $(m_{T'}-1=0)$.

For $\wedge$-introduction of $T_1,T_2$, we can first eliminate from $T_2$ the literals already in $T_1$ using $\wedge$-elimination, and then derive $m_{T_2\setminus T_1}(m_{T_1}-1) + (m_{T_2\setminus T_1}-1) = m_{T_1\cup T_2}-1$.

For a cut on $T=\ell_1\wedge\cdots\wedge\ell_j$ ($j\leq k$), the translation of $\bigvee_i \neg \ell_i$ contains the disjunct $(\ell_i=0)$ for every $i$. 
We multiply each by $m_T/\ell_i$, contract to obtain $(m_T=0)$, and then combine this with $(m_T-1=0)$ from $T$ to get a false constant, which is then deleted. 

All displayed polynomials have degree at most $k$, and each $\Res(k)$ inference is simulated using $O(k)$ steps. So the whole refutation has size $O(kS)$.
\end{proof}

\begin{corollary}\label{cor:resk}
Let $k=k(n)$ be a parameter and let $F$ be a CNF on $n$ variables. 
If $\Q(F)$ has no PC refutation over $\mathbb F_2$ in degree $D$, where $D\ge\max\{32k,8w(F)\}$, then every $\Res(k)$ refutation of $F$ has DAG-size 
\begin{equation}\label{eq:app-resk}
 S\ge\frac{1}{O(k)}\exp\left[\frac{D}{128k}\left(\frac{D}{16n}\right)^k\right].
\end{equation}
In particular, when $D=\Omega(n)$, $S\ge\exp\left( n/2^{O(k)} \right)$.
\end{corollary}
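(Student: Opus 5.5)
The plan is to combine Fact~\ref{fact:resk} with Theorem~\ref{thm:quantitative}, specialized to $p=2$ and $r=k$. Let $\Pi$ be an $S$-node $\Res(k)$ refutation of $F$. By Fact~\ref{fact:resk} with $p=2$, $\Pi$ converts into a $\Res(\PC_k/\mathbb F_2)$ refutation of $F$ with $S'\le CkS$ nodes, for an absolute constant $C$. The translation must start from the axioms of $F$ exactly as Theorem~\ref{thm:quantitative} expects. A clause $J=\ell_1\vee\cdots\vee\ell_s$ becomes the disjunction of equations $(\ell_i-1=0)$. This is the initial line $\bigvee_i(1-\ell_i=0)$ used in Proposition~\ref{prop:simulation}, up to sign, which is harmless because the indicator $g_h$ depends only on the zero set of $h$. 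If the sign is insisted on, a multiplication by the constant $-1$ fixes it at $O(1)$ cost per line.

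Next I check the hypotheses of Theorem~\ref{thm:quantitative} with $p=2$ and $r=k$. The degree condition is $D\ge\max\{32(p-1)^2r,\,8w(F)\}=\max\{32k,8w(F)\}$, which is exactly the assumption. The assumption $\Q(F)\not\vdash_D 1$ over $\mathbb F_2$ is given. Therefore
\[
\ln S'\ \ge\ \frac{D}{32\cdot 4\cdot k}\Bigl(\frac{D}{16n}\Bigr)^{k}=\frac{D}{128k}\Bigl(\frac{D}{16n}\Bigr)^{k}.
\]
Since $S\ge S'/(Ck)$, this gives \eqref{eq:app-resk}.

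For the ``in particular'' claim, suppose $D\ge cn$. The exponent is then at least $\frac{cn}{128k}(c/16)^k=n\cdot 2^{-O(k)}$, since $1/k\ge 2^{-k}$. The $1/O(k)$ prefactor is absorbed: either $n/2^{O(k)}$ dominates $\ln k$, or the claimed bound is trivial after adjusting the constant in the $O(k)$. Also note that $D\le n$ (shown in Section~\ref{sec:bounds}), so $D=\Omega(n)$ and $D\ge 32k$ are jointly consistent only when $k=O(n)$. This is fine.

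The only real obstacle is bookkeeping. One must confirm that the size notion (number of nodes) and the initial-line encoding in Fact~\ref{fact:resk} match those of Theorem~\ref{thm:quantitative}. One must also confirm that the simulation in Fact~\ref{fact:resk} keeps every displayed polynomial at degree at most $k=r$; this is stated in that proof. Beyond this, the argument is direct substitution.
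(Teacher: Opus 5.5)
Your proposal is correct and matches the paper's proof, which is exactly the combination of Fact~\ref{fact:resk} with Theorem~\ref{thm:quantitative} at $p=2$, $r=k$. This gives the constant $32\cdot 2^2=128$ and the exponent $(p-1)r=k$, and the $O(k)$ blow-up in the simulation becomes the $\frac{1}{O(k)}$ prefactor. Your extra bookkeeping is sound; note also that over $\mathbb F_2$ the sign issue in the initial lines disappears entirely, since $\ell_i-1=1-\ell_i$.
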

\begin{proof}
Apply Fact~\ref{fact:resk} and Theorem~\ref{thm:quantitative} with $p=2,r=r(n)=k$. 
\end{proof}

For every family in Theorem~\ref{thm:applications}, we thus obtain for every $k\ge 1$
\[
 S\ge \exp\left(N/2^{O(k)}\right).
\]
Here, for Tseitin modulo $2$, we apply Theorem~\ref{thm:quantitative} and Fact~\ref{fact:resk} with $p=3$ instead. 

For random constant-width CNFs, including 3-CNFs, this extends the $k=O(\sqrt{\log n})$ range in prior works \cite{Carenini26,SofronovaSokolov25}.
For $\bitPHP$, earlier near-exponential bounds hold for $k=O((\log N)^{1/2-\varepsilon})$ for arbitrary $m$~\cite{DGGM24}, while our stronger bound applies to $m=O(N)$ covered by the boundary expanders used here.
For modulo-two Tseitin and $\PHP$, the $\Res(k)$ lower bounds for logarithmic $k$ also follow from the known bounded-depth Frege lower bounds \cite{BenSasson02,BIKPPW92}.
For $\PHP$, Razborov~\cite{Razborov15} proved $\exp(N/(\log N)^{O(k)})$, and here the denominator improves to $2^{O(k)}$.

\subsection{Proof search}\label{sec:app-search}

Since degree-bounded PC refutations can be efficiently found \cite{CEI96}, Theorem~\ref{thm:quantitative} gives the following proof search algorithm as a direct analogue of those in~\cite{B-SW,IPS}.

\begin{corollary}\label{cor:search}
Fix $p,r$. Let $d=(p-1)r$. 
If a width-$w$ CNF $F$ on $n$ variables has a $\Res(\PC_r/\Fp)$ refutation of size $S\geq 2$, then a PC refutation of $\Q(F)$ can be found in time
\begin{equation}\label{eq:search}
 |F|^{O(1)}\cdot  n^{O\big( w+ n^{d/(d+1)} (\ln S)^{1/(d+1)} \big)}.
\end{equation}
The algorithm need not know $S$.
\end{corollary}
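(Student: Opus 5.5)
The algorithm is iterative deepening on the PC degree. For $D' = 1, 2, 3, \ldots$ we run the bounded-degree PC refutation search of \cite{CEI96} on $\Q(F)$ in degree $D'$. This amounts to Gaussian elimination in $\V_{D'}$, which has dimension $n^{O(D')}$, so each round costs $|F|^{O(1)} n^{O(D')}$. We stop at the first $D'$ for which $1\in\C_{D'}(\Q(F))$. Such a $D'\le n$ exists because $F$ is unsatisfiable; this is the brute-force remark at the start of Section~\ref{sec:bounds}. The procedure never consults $S$. Summing over rounds gives a total time of $|F|^{O(1)} n^{O(D^*)}$, where $D^*$ is the minimal refutation degree. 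It therefore suffices to show
\[
D^*\le O\bigl(w+ n^{d/(d+1)}(\ln S)^{1/(d+1)}\bigr).
\]

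To bound $D^*$, apply Theorem~\ref{thm:quantitative} with $D:=D^*-1$, so that $\Q(F)\not\vdash_D 1$. There are two cases.

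If $D<\max\{32(p-1)^2r,\,8w\}$, then $D^*=O_{p,r}(w+1)$ and we are done.

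Otherwise the theorem applies. Since $d=(p-1)r$, it gives
\[
\ln S\ge \frac{D}{32p^2r}\left(\frac{D}{16n}\right)^{d}
= c_{p,r}\,\frac{D^{d+1}}{n^{d}}.
\]
Rearranging yields $D\le O_{p,r}\bigl(n^{d/(d+1)}(\ln S)^{1/(d+1)}\bigr)$. The assumption $S\ge 2$ ensures $\ln S$ is bounded away from zero, so the additive $+1$ in $D^*=D+1$ and the constant $O_{p,r}(1)$ from the first case can both be absorbed into the big-$O$ in the exponent. Combining the two cases gives the bound on $D^*$ claimed above, and hence the running time \eqref{eq:search}.

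The only point requiring care is the precise cost of the degree-$D'$ search: dimension $n^{O(D')}$ and input size $|F|$, polynomially combined. This is the standard bound from \cite{CEI96}. Everything else is parameter arithmetic.
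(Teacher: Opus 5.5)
Your proposal is correct and follows the paper's argument: the paper likewise applies Theorem~\ref{thm:quantitative} to bound the minimal refutation degree by $O_{p,r}\bigl(w+n^{d/(d+1)}(\ln S)^{1/(d+1)}\bigr)$, then runs the \cite{CEI96} search in increasing degree. Setting $D=D^*-1$, splitting on the threshold $\max\{32(p-1)^2r,8w\}$, and using $S\ge 2$ to absorb additive constants only spell out details the paper's two-line proof leaves implicit.
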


\begin{proof}
Theorem~\ref{thm:quantitative} implies the existence of a PC refutation of $\Q(F)$ in degree
\[
 D=O_{p,r}\left(  w+ n^{d/(d+1)} (\ln S)^{1/(d+1)} \right).
\]
The search for a PC refutation in increasing degree $D$ will succeed in $|F|^{O(1)}n^{O(D)}$ time.
\end{proof}

For $\Res(\oplus_2)$, the runtime in \eqref{eq:search} is $|F|^{O(1)}n^{O(w+\sqrt{n\ln(2S)})}$. 
For $\Res(k)$ with $k=k(n)$, by Fact~\ref{fact:resk} and Theorem~\ref{thm:quantitative} taking $p=2,r=k$, we get the analogue of~\eqref{eq:search} with exponent $O\bigl(w+k+n^{k/(k+1)}(k\ln(kS))^{1/(k+1)}\bigr)$ over $n$.

\section{Conclusions}\label{sec:conclusions} 
The rank-based proof approximation behind Theorem~\ref{thm:quantitative} extends to $\ACzerop$-Frege, basically because rank is subadditive. We make this explicit below.

The Frege system can be formulated in many different but polynomially equivalent ways.
For definiteness, we fix one. 
A proof line is an $\ACzerop$-formula, i.e., a rooted tree whose leaves are Boolean variables and non-leaf nodes are \emph{gates} of types $\neg,\vee,\MOD_{p,a}\ (a\in\Fp)$, the latter two allowing unbounded fan-in. 
We formally represent conjunction by the De Morgan dual of disjunction. 
Formula \emph{depth} is the maximum number of $\vee,\MOD_{p,a}$ gates on a root-to-leaf path (negations do not count). 
A \emph{subformula} is a subtree identified by a gate.

\begin{definition}[{$\ACzerop$-Frege}]\label{def:ac0p-frege}
The unbounded-disjunction Frege system~\cite{BPU92} has lines as $\ACzerop$-formulas. 
The deduction rules include \emph{weakening} and \emph{cut},
\begin{equation}\label{eq:wk-ct}
\frac{A}{A\vee B},\qquad
\frac{A\vee B\qquad\neg A\vee C}{B\vee C}.
\end{equation}
Disjunctions are indexed by sets, so the \emph{permutation} and \emph{contraction} rules are automatic. 
It also has the \emph{merging}/\emph{unmerging} rule (i.e., unnesting/nesting ORs):
\begin{equation}\label{eq:merge}
\frac{\bigvee(\{\bigvee\Gamma\}\cup\Delta)}
     {\bigvee(\Gamma\cup\Delta)},
\qquad
\frac{\bigvee(\Gamma\cup\Delta)}
     {\bigvee(\{\bigvee\Gamma\}\cup\Delta)}
\end{equation}
where $\Gamma,\Delta$ are sets of formulas. 
The gate $\MOD_{p,a}(\bar A)=\MOD_{p,a}(A_1,\ldots,A_k)$ represents that the Boolean truths of $A_1,\ldots,A_k$ sum to $a$ mod $p$, with axiom schema~\cite[Definition~1.1]{BIKPRS96}:
\[
\begin{aligned}
\mathrm{(M1)}\quad& \MOD_{p,0}(),\\
\mathrm{(M2)}\quad& \neg \MOD_{p,a}() \qquad (1\le a<p),\\
\mathrm{(M3)}\quad& \MOD_{p,a}(\bar A,B)\leftrightarrow \bigl[(\MOD_{p,a}(\bar A)\wedge\neg B)\vee(\MOD_{p,a-1}(\bar A)\wedge B)\bigr].
\end{aligned}
\]
Here, $(A_1,\ldots,A_k)$ is treated as a multi-set so permutation of arguments is automatically allowed. 
The $\leftrightarrow$ sign in (M3) is encoded using $(\neg X\lor Y)\bigwedge(X\lor \neg Y)$ for $X\leftrightarrow Y$. 
The \emph{excluded-middle schema} $A\vee \neg A$ is also allowed. 
The depth-$\ell$ system allows only formulas of depth at most $\ell$. 
Proof \emph{size} counts the number of gates, including leaves, across proof lines.
\end{definition}

For very small depth $\ell\in\{2,3,4\}$, we encode (M3) using four clauses: with $X=\MOD_{p,a}(\bar A)$, $Y=\MOD_{p,a}(\bar A,B)$, $Z=\MOD_{p,a-1}(\bar A)$, they are
\begin{equation}\label{eq:M3-alternative}
B\vee X \vee \neg Y, \ \ B\vee \neg X\vee Y,\ \ \neg B\vee\neg Y\vee Z, \quad \neg B\vee Y\vee\neg Z.
\end{equation} 

The following holds for all $\ell\geq 2$. 
\begin{restatable}[Conditional {$\ACzerop$}-Frege lower bounds]{proposition}{frege}\label{prop:frege}
Fix a prime $p$ and an integer $\ell\ge2$. 
Let $(F_n)_n$ be a CNF family on $n$ variables where $\Q(F_n)\not\vdash_D 1$, for some $D= w(F_n) + \Omega(n)$. 
Then every depth-$\ell$ $\ACzerop$-Frege refutation of $F_n$ has size at least
\begin{equation}\label{eq:AC0p-bound}
 \exp\!\left[
 	\Omega\left(\min\left\{ n^{\frac{0.99}{2\ell-1}}, \quad 
 	\left(\frac nB\right)^{\!\frac1{\ell-2}},\quad 
 	\left(\frac{n\log n}{B}\right)^{\!\frac1{\ell-1}}\right\}\right)
\right],\ B:=\max\Big\{0,\frac n2 - \big(D-w(F_n)\big)\Big\}.
\end{equation}
Here, the terms involving $\frac1B$ are omitted when $B= 0$, and the middle term is omitted if $\ell=2$. 

In particular, if
\begin{equation}\label{eq:frege-threshold}
 D-w(F_n)\ge\frac n2-o\!\left(\frac n{(\log n)^{\ell-2}}\right),
\end{equation}
then every depth-$\ell$ $\ACzerop$-Frege refutation has size $n^{\omega(1)}$.
\end{restatable}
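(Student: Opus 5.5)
We extend the common-multiplier argument of Theorem~\ref{thm:common} and Proposition~\ref{prop:simulation} from lines that are disjunctions of low-degree equations to lines that are $\ACzerop$-formulas. Every subformula $A$ gets a Razborov--Smolensky approximator: a polynomial $\widetilde A$ that is $0$--$1$ valued on the cube and has degree at most $\Delta_j$ at depth $j$. The approximators are built bottom-up. At a $\MOD_{p,a}$ gate we use the exact expression $1-(\sum_i \widetilde A_i-a)^{p-1}$, which multiplies the degree by $p-1$. At an $\vee$ gate over children $A_1,\dots,A_m$ we use $1-\prod_{j=1}^{t}(1-q_j^{p-1})$, where the $q_j$ range over $\Span\{\widetilde A_i\}$. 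A single multilinear $f\in\V_K\setminus\C_D$ must then annihilate modulo $\C_D$ every error-witness $\widetilde A_i\prod_j(1-q_j^{p-1})$ of every $\vee$-gate in the proof. The key point is the subadditivity of rank. For each $\vee$-gate, Lemmas~\ref{lem:hilbert} and~\ref{lem:onestep} drive the rank $c_P(K)$ of its error-quotient map below $h_K(1-\eta)^t$, now with $d$ equal to the degree of the children's approximators. The common kernel of all these maps over the at most $S$ gates is nonzero modulo $\C_{\le K}$ as soon as $S(1-\eta)^t<1$. This is the same dimension count as in Theorem~\ref{thm:common}. The result is an $f$ with $f\cdot(\text{error witness})\in\C_D$ at every gate.

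Next comes the simulation, which parallels Proposition~\ref{prop:simulation}. By induction along the proof we show $f\,(1-\widetilde L)\in\C_D$ for every line $L$, modulo the identity $\widetilde{\neg A}=1-\widetilde A$. The local steps are as follows.
\begin{itemize}
\item Weakening and cut follow from the telescoping identity \eqref{eq:prefix} together with the error-annihilation property, exactly as in \eqref{eq:syntactic-weakening} and \eqref{eq:one-false}.
\item Merging and unmerging compare the approximator of a nested $\vee$ with that of the flat $\vee$. Both are reduced to ``$f$ times falsity $\equiv f$ times the product of the children's falsities'', which holds because $f$ kills the error witnesses at both gates.
\item The $\MOD$ axioms (M1)--(M3) are exact under our gate approximation. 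They are therefore Boolean identities of degree $O(p\Delta_\ell)$ and are derivable by Fact~\ref{fact:boolean}. The same applies to excluded middle, and to the clause encoding \eqref{eq:M3-alternative} at small depth.
\item The input clauses are handled as in \eqref{eq:syntactic-initial}.
\end{itemize}
At the empty line we obtain $f\in\C_D$, a contradiction. This requires $K+O(\Delta_\ell)+w\le D$.

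The parameters then have to be balanced. Let $d_j$ denote the approximator degree at depth $j$, so $d_j\approx (p-1)\,t_j\,(p-1)d_{j-1}$, which gives roughly $\Delta_\ell\approx \prod_j t_j$ up to constants. We need $\eta_j t_j\gtrsim\ln S$ with $\eta_j\approx(K/n)^{d_{j-1}}$, and this is viable only while $d_{j-1}$ stays bounded or $K/n$ is close to $1$. Here enters $B$: the slack $n/2-(D-w)$ controls how close $K$ can be taken to $n/2$. It then has to be decided how the ratio $N_n(K-d)/N_n(K)$ behaves when $K\approx n/2-B$. For $d\ll\sqrt n$ near the middle this ratio is about $e^{-O(dB/n)}$, and of constant order once $d\lesssim n/B$. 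This is the source of the terms $(n/B)^{1/(\ell-2)}$ and $(n\log n/B)^{1/(\ell-1)}$. Meanwhile, the requirement that all degrees remain $\le D\le n$ together with $t_j\approx \ln S/\eta_j$ gives $(\ln S)^{2\ell-1}\lesssim n^{0.99}$, which is the first term. To pin down the exponents I would take $t_j\approx \ln S$ at each level. The bottom level has $d_0=1$, and each higher level has $d_j\approx (\ln S)^{j}$. I would then impose $d_{\ell-1}\lesssim n/B$ (or its $\log$ variant at the top level) and $\Delta_\ell\lesssim n$.

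The main obstacle is this last step. The quantity $\eta$ from \eqref{eq:approx-parameters} decays exponentially in $d$ when $K$ is a small constant fraction of $n$, so the naive choice $K=D/4$ used in Theorem~\ref{thm:quantitative} fails already at depth $3$. The proof hinges on taking $K$ almost $D-w$ and keeping all approximator degrees $o(n)$, which is why the hypothesis reads $D-w\approx n/2$. The precise binomial-ratio estimate near the middle layer is where I expect the delicate calculation, and where the $0.99$ and the $\log n$ factors come from.
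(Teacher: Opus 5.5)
Your architecture is the paper's: exact $\MOD$ and negation gates, Razborov--Smolensky approximators at $\vee$ gates with level-dependent $t_i$, a single multiplier $f$ obtained from subadditivity of rank, and simulation of the $f$-multiplied falsities. There are, however, two genuine gaps.

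\textbf{Parameters.} The quantitative bound, which is the actual content of the statement, is not derived, and your heuristic for it is wrong. Requiring only $\Delta_\ell\lesssim n$ would suggest the exponent $1/\ell$ rather than $1/(2\ell-1)$. What really limits the argument is a tension you have not isolated. With $L=\ln(2|\Pi|)$ and $\Delta_i\approx L^i$, the simulation needs a reserve $D-w(F)-K\ge O(\Delta_\ell)$. That same reserve also adds to $\frac n2-K$. By the middle-layer estimate $N_n(K-b)/N_n(K)\ge\exp[-O(b(\frac n2-K+b+\sqrt n)/n)]$ (Fact~\ref{fact:binomial}, which you would also have to prove), it therefore costs a factor $\exp[-O(L^{\ell-1}\cdot\mathrm{reserve}/n)]$ in the top-level rate $\eta_K(\Delta_{\ell-1})$. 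The paper takes $K=\lfloor D-w(F)-n/L^{\ell-1}\rfloor$ and $t_i=\lceil L/\eta_K(\Delta_{i-1})\rceil$. This balances the two effects and forces $L^{2\ell-1}=o(n)$; the $0.99$ leaves room for the factor $\exp[O(BL^{\ell-1}/n)]$ by which $t_\ell$, and hence $\Delta_\ell$, is inflated.

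The two $B$-terms also come from different levels:
\begin{itemize}
\item Below the top level, the paper needs $\eta_K(\Delta_{i-1})=\Omega(1)$ to keep $\Delta_i=O(L^i)$, i.e.\ $BL^{\ell-2}/n=O(1)$. So your condition $d_{\ell-1}\lesssim n/B$ is off by one level.
\item At the top level, $\eta_K(\Delta_{\ell-1})$ may be polynomially small, i.e.\ $BL^{\ell-1}/n=O(\log n)$. This is allowed only because $t_\ell=\lceil L/\eta\rceil$. With your choice $t_j\approx\ln S$ at every level, $(1-\eta)^{t_\ell}$ would not drop below $1/S$ there.
\end{itemize}

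\textbf{Simulation of axioms.} This step fails as stated. (M3) is not exact under your approximation. Its $\wedge$, $\vee$ and $\leftrightarrow$ (encoded as $(\neg X\vee Y)\wedge(X\vee\neg Y)$, with $\wedge$ the De Morgan dual of $\vee$) are $\vee$ gates carrying Razborov--Smolensky approximators with errors. The same holds for the top gate of $A\vee\neg A$ and of the clauses \eqref{eq:M3-alternative}. So these are not Boolean identities. You must use $f$-annihilation of the error witnesses at those gates, in a degree independent of the unbounded fan-in of the $\MOD$ gates.

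For the same reason, your merging argument through ``$f$ times the product of the children's falsities'' is too expensive: that product has degree proportional to the fan-in. Telescope instead. For example, with $A=\bigvee\Gamma$ and $O=\bigvee(\Gamma\cup\Gamma')$, the polynomial $fP_O(1-P_A)=\sum_{\gamma\in\Gamma}U_{A,\gamma}\,fP_O(1-P_\gamma)$ is a combination of error witnesses of $O$.

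The paper handles all schemata and rules uniformly with a gate-semantics lemma. Take a template with $t_O$ $\vee$ gates and $t_M$ $\MOD$ gates. Every $R$ vanishing on gate-consistent assignments is derivable from:
\begin{itemize}
\item Booleanity and the negation and $\MOD$ relations;
\item the error polynomials $z_v(1-z_{v_i})$;
\item the telescoping relations $1-z_v-\sum_iu_{v,i}(1-z_{v_i})$.
\end{itemize}
The derivation has degree $\deg R+3t_O+(p-1)^2t_M+O(1)$, independent of fan-in. It splits on the values of the $\vee$ gates and selects one true child for each true gate. Substituting $z_v=P_v$, $u_{v,i}=U_{v,i}$ and multiplying by $f$ then gives $fP_A\in\C_D$ in degree $K+O_p(\Delta_\ell)$.
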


The $0.99$ in the first term of \eqref{eq:AC0p-bound} can be replaced by any constant below 1. 
The second and third terms of \eqref{eq:AC0p-bound} penalize $D - w(F)$ when it is far below $\frac n2$; they come from the approximation constraints on gates at depth $<\ell$ and at depth $\ell$, respectively. 

\smallskip

The proof is not surprising after Theorem~\ref{thm:quantitative}. We include it in Appendix~\ref{sec:appendix} for two purposes: to clarify parameter dependence, and to streamline the simulation.

\paragraph{Open problems.} 
The main question is to prove a PC degree lower bound near $n/2$ as needed above, or eliminate errors with a better rate $\eta_K(\delta)$ such as by constructing tailored error measures for special formulas.

Over characteristic zero, a general degree--size relation has to fail: graph $\FPHP$ on bounded-degree boundary expanders requires linear PC degree, despite having polynomial-size $\Res(\mathrm{lin}_{\mathbb Q})$ refutations~(cf. \cite{RazTzameret08}).
Can one nevertheless prove exponential lower bounds for constant-width CNF (cf. \cite{PartTzameret,Khaniki,Alekseev26})? 

It would also be interesting to either improve the exponent in~\eqref{eq:main} or find formulas witnessing its tightness, in light of the results in \cite{BonetGalesi01,GalesiLauria10} for resolution and PC/PCR.

\ifthenelse{\boolean{personalsite}}{}{
\paragraph{AI Disclosure.}
The author studied \cite{Braun2026exponential} by asking targeted questions with ChatGPT-6 Extra High. He then supplied a simplified proof architecture to ChatGPT-6 Pro, together with a suggestion to use a multiplier to restore inferences disrupted by Razborov--Smolensky.

The initial exchanges identified obstructions to certain approaches for characteristics other than $2$. After several more, the model proposed a lower-bound proof for $\bitPHP$ in $\Res(\oplus_p)$. The author subsequently suggested seeking a measure of the errors relative to the input formula. The model developed this into a rank measure modulo bounded-degree PC, and proposed essentially the arguments in Section~\ref{sec:rank}. 

Clearly the suggestions directed the search materially; only their necessity remains unclear. 
The author studied, anatomized and reconstructed the mathematics, then wrote the paper with AI assistance at that stage limited to editorial suggestions.

\paragraph{Acknowledgements} 
I thank Jonas Conneryd and Yassine Ghannane who brought \cite{Braun2026exponential} to my attention, and Jonas Conneryd for suggesting the question in characteristic 0.
}

{\small 
\setstretch{0.9}

\bibliographystyle{alpha}
\bibliography{refs}  

}

\appendix
\section{Proof of Proposition~\ref{prop:frege}}\label{sec:appendix}

We use an elementary fact.
\begin{fact}\label{fact:binomial} 
Let $a(n),b(n)\ge 0$ be integer parameters with $\frac{a+b}n \leq \frac12 - \varepsilon$ for some constant $\varepsilon>0$. Write $K= \lfloor\frac{n}{2} - a\rfloor$, then 
\begin{equation}\label{eq:frege-est}
 \frac{N_n(K-b)}{N_n(K)}\ge \exp\!\left[- O_\varepsilon\left(\frac{b(a+b+\sqrt n)}n\right)\right].
\end{equation}
\end{fact}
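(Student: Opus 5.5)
The plan is to telescope the ratio degree by degree and bound each one-step loss by a local estimate on the ratio of a binomial coefficient to its cumulative sum. Write
\[
\frac{N_n(K-b)}{N_n(K)}=\prod_{j=0}^{b-1}\frac{N_n(K-j-1)}{N_n(K-j)}=\prod_{j=0}^{b-1}\Bigl(1-\rho(K-j)\Bigr),\qquad \rho(m):=\frac{\binom nm}{N_n(m)}.
\]
For $0\le j<b$, put $m=K-j$ and $s:=\frac n2-m$. Then $s\le a+j+1\le a+b+1$. The hypothesis $\frac{a+b}n\le\frac12-\varepsilon$ gives $m\ge \varepsilon n-1$, so $m$ is a constant fraction of $n$. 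It therefore suffices to establish two estimates for every such $m$:
\begin{equation*}
\text{(i)}\ \ \rho(m)\le C_\varepsilon\,\frac{s+\sqrt n}{n},\qquad\qquad \text{(ii)}\ \ \rho(m)\le 1-\delta_\varepsilon .
\end{equation*}
Here $\delta_\varepsilon>0$ depends only on $\varepsilon$.

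Given (ii), we have $\ln(1-x)\ge -O_\varepsilon(x)$ for $0\le x\le 1-\delta_\varepsilon$. Combining this with (i), the logarithm of the product is at least
\[
-O_\varepsilon\Bigl(\sum_{j<b}\frac{a+j+1+\sqrt n}{n}\Bigr)=-O_\varepsilon\Bigl(\frac{b(a+b+\sqrt n)}n\Bigr),
\]
which is \eqref{eq:frege-est}.

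Estimate (ii) is easy. We have $\rho(m)\le \binom nm/(\binom nm+\binom n{m-1})$, and the consecutive ratio satisfies $\binom n{m-1}/\binom nm=\frac{m}{n-m+1}\ge\frac{\varepsilon n-1}{n}$. The latter is bounded below by a positive constant depending only on $\varepsilon$ for large $n$; small $n$ is absorbed into the $O_\varepsilon$.

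For (i), I will lower bound $N_n(m)\ge\sum_{i=0}^{L}\binom n{m-i}$ with the window length
\[
L:=\Bigl\lfloor \frac{n}{s+\sqrt n}\Bigr\rfloor\le\sqrt n\le m,
\]
where $L\le m$ holds for large $n$. For $i\le L$, the product of consecutive ratios gives
\[
\frac{\binom n{m-i}}{\binom nm}=\prod_{i'=0}^{i-1}\frac{m-i'}{n-m+i'+1}=\prod_{i'}\Bigl(1-\frac{2s+2i'+1}{n/2+s+i'+1}\Bigr).
\]
Each factor is bounded away from $0$ (as in (ii)), and the product is at least $\exp\bigl(-O_\varepsilon(L(s+L)/n)\bigr)$. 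By the choice of $L$ we get
\[
\frac{L(s+L)}n\le \frac{s}{s+\sqrt n}+\frac{n}{(s+\sqrt n)^2}\le 2,
\]
so every term in the window is at least $c_\varepsilon\binom nm$. Hence $N_n(m)\ge c_\varepsilon (L+1)\binom nm$, which yields (i).

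The main care point is this window argument: the window must be long enough to give the $\frac{s+\sqrt n}{n}$ rate, both in the near-central regime $s\lesssim\sqrt n$ (where the $\sqrt n$ term dominates) and in the regime of large $s$. At the same time, it must stay short enough that the binomials in it do not decay. The choice $L\approx n/(s+\sqrt n)$ balances exactly these two requirements. The remaining bookkeeping is routine: floors, the $+1$ in $s\le a+j+1$, and small $n$.
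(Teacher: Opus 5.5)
Your proposal is correct and follows the paper's proof essentially step for step. Both arguments telescope $N_n(K-b)/N_n(K)$ into the one-step factors $1-\binom nk/N_n(k)$, bound $\binom nk/N_n(k)$ by $O_\varepsilon\bigl(((\frac n2-k)+\sqrt n)/n\bigr)$ via a window of about $n/((\frac n2-k)+\sqrt n)$ comparable binomials below $k$, and use the bound $\binom nk/N_n(k)\le (n+1-k)/(n+1)$ to pass to the exponential. The only cosmetic difference is in the window: you bound the product of consecutive ratios factor by factor, whereas the paper uses monotonicity to bound it below by the $i$th power of the smallest ratio.
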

\begin{proof}
For any $k \in [\varepsilon n,\ \frac{n}{2}]$, we have
\begin{equation}\label{eq:EST-1}
\frac{\binom n k}{N_n(k)} \le O_\varepsilon\left(\frac{(\frac n2 - k) +\sqrt n}{n}\right).
\end{equation}
Indeed, the ratio {$\dfrac{\binom ni}{\binom n{i+1}}$} is increasing in $i$, so for $0\le i\le k$, 
{$\dfrac{\binom n{k-i}}{\binom nk} \ge \left(\dfrac{\binom n{k-i}}{\binom{n}{k-i+1}}\right)^i = 
\left(\dfrac{k-i+1}{n-k+i}\right)^i$}. 
The expression {$\dfrac{k-i+1}{n-k+i}$}, by anchoring its numerator and denominator at $\dfrac n2$, is equal to
${\dfrac{1 - x +\frac2n}{1 +  x}}$ for ${x=\dfrac{(n/2 - k)+i}{n/2}}$, so it is lower bounded by 
$\exp\left[-O_\varepsilon(x)\right]$ when $x \leq 1-\varepsilon$, which holds when $i\leq \dfrac k2$.
In particular, for {$0\leq i \leq \dfrac{n}{(n/2-k)+\sqrt n} \ll \dfrac k2$}, 
\[\frac{\binom n{k - i}}{\binom n k} \ge \exp\left[ - O_\varepsilon\left(\frac{i( (n/2 - k) +i)}{n}\right) \right]=\Omega_\varepsilon(1).\]
Summing over these $i$'s then taking reciprocal gives \eqref{eq:EST-1}.

We also have that {$\dfrac{\binom{n}{k}}{N_n(k)}$} is bounded away from 1: 
{$\dfrac{\binom{n}{k}}{N_n(k)} \le \dfrac{\binom nk} {\binom nk+\binom n{k-1}} = \dfrac{n+1-k}{n+1}\le 1 - \dfrac{\varepsilon}{2}$}. 
Thus $1-\frac{\binom{n}{k}}{N_n(k)}\geq \exp\left( - O_\varepsilon(\frac{\binom{n}{k}}{N_n(k)})\right)$. Together, 
\begin{equation}\label{eq:EST-2}
\frac{N_n(k-1)}{N_n(k)} =1-\frac{\binom nk}{N_n(k)}\ge \exp\left[ - O_\varepsilon\left(\frac{\binom{n}{k}}{N_n(k)}\right)\right]
\stackrel{\eqref{eq:EST-1}}\ge \exp\left[ - O_\varepsilon\left(\frac{(n/2-k)+\sqrt n}{n}\right)\right].
\end{equation}
Apply \eqref{eq:EST-2} to $k=K,K-1,\ldots,K-b+1$ which are all at least $\varepsilon n$, and multiply the result, 
\[
\frac{N_n(K-b)}{N_n(K)} \ge \exp\left(-C_\varepsilon \sum_{j=0}^{b-1}\frac{a+j+\sqrt n}{n}\right) \ge 
\exp\left(-C_\varepsilon \frac{b(a+b+\sqrt n)}n\right). \qedhere
\]
\end{proof} 

\frege*

\begin{proof}[Proof of Proposition~\ref{prop:frege}]
Take $F=F_n$ and $D$ as in the proposition. 
Without loss of generality, assume $D-w(F)\le \frac n2$; otherwise we can take a smaller $D$. 

Let $\Pi$ be a refutation of $F$. Since $D=\Omega(n)$ and PC simulates $\Pi$ in degree $O_{p,\ell}(|\Pi|)$, we have $|\Pi|\to\infty$. Below, denote
\begin{equation}\label{eq:frege-L}
L:=\ln(2|\Pi|),\ \ \text{so $L\to\infty$.}
\end{equation} 
Let $\mathcal S_\Pi$ be the set of all distinct subformulas in $\Pi$, then $|\mathcal S_\Pi| \le|\Pi|$. 
The proof plan is to first construct approximators for all formulas in ${\mathcal S_\Pi}$ by induction on depth, choose a multiplier, then simulate the $f$-ORified version of $\Pi$. 

For $B=\lfloor\frac{n}{2}-(D-w(F))\rfloor$, we show that if
\begin{equation}\label{eq:frege-trial}
 L \leq c\cdot \min\left\{ n^{\frac{0.99}{2\ell-1}},\ \left(\frac nB\right)^{\!\frac1{\ell-2}},\ \left(\frac{n\log n}{B}\right)^{\!\frac1{\ell-1}}
 \right\}
\end{equation}
for a sufficiently small constant $c$ (where the $B=0$ and $\ell=2$ conventions carry over), then this simulation is in degree $D$ and this would reach a contradiction. 
\smallskip 

\textbf{1. Approximator construction.} 
For each formula $A\in{\mathcal S_\Pi}$, we construct a Boolean-valued polynomial $P_A$ (the ``approximator''), where $P_A=0$ approximates the predicate $A$. We induct on formula depth and on formula structure within the same depth.
At depth $0$, we assign literals their falsity polynomials of degree $\Delta_0=1$. 

Suppose $A$ is a formula (as a gate) of depth $i$. If it is an OR gate $A=\lor_{j=1}^s A_j$, then by the inductive hypothesis we have constructed approximators of its children with degrees at most $\Delta_{i-1}$. 
Define $P_A$ by~\eqref{eq:product} applied to $1-P_{A_1},\ldots,1-P_{A_s}$ using $t_i$ factors: 
\begin{equation}\label{eq:PA-appendix}
P_A = \prod_{j=1}^{t_i} ( 1 - q_{A,j}^{p-1} ),\quad\text{where}\ \ q_{A,j}\in\Span_{\Fp}\{1-P_{A_1},\ldots,1-P_{A_s}\}.
\end{equation}
The error polynomials are thus
\begin{equation}\label{eq:eAj}
 e_{A,j}=P_A(1-P_{A_j}), \quad j\in[s].
\end{equation}
Let $T_A$ be the stacked map $T_A\colon\V_K\longrightarrow\bigoplus_{\alpha\in[s]}(\V_D/\C_D)$ defined by $v\longmapsto ([ve_{A,1}],\ldots,[ve_{A,s}])$, then by Theorem~\ref{thm:common}, we can find such a $P_A$ satisfying 
\begin{equation}\label{eq:P-A-T-A}
 \deg P_A\le\Delta_i=(p-1)\Delta_{i-1}t_i
 \quad\text{and}\quad
 \rank T_A\le h_K(1-\eta_K(\Delta_{i-1}))^{t_i},
\end{equation}
provided $\Delta_{i-1}\le K$ and $K+\Delta_i+\Delta_{i-1}\le D$. Here, recall 
\begin{equation}\label{eq:eta-x}
 h_K=\dim\ \V_K/(\C_{D}\cap \mathcal V_K),\quad \eta_K(x)=\left(1-\frac1p\right)
 \frac{N_n(K-x)}{N_n(K)}.
\end{equation}
If $A$ is a modular or negation gate, we let
\begin{align}
P_{\MOD_{p,a}(A_1,\ldots,A_k)} &=\left(\sum_{j=1}^k(1-P_{A_j})-a\right)^{p-1}\label{eq:MOD-gate}\\
P_{\neg A} &=1-P_A\label{eq:neg-gate}
\end{align} 
These are exact given the child approximators, introducing no new error polynomials. In the modular gate case, the degree is at most $(p-1)\Delta_{i-1}$.

\textbf{2. Parameter choice.}
We choose 
\begin{align}
 K &= \lfloor D-w(F) - \frac{n}{L^{\ell-1}}\rfloor \label{eq:choice-K}\\
 t_i &=\left\lceil\frac{L}{\eta_K(\Delta_{i-1})}\right\rceil\qquad\text{$\forall i\in[\ell]$}.\label{eq:choice-t-i}
\end{align}
By~\eqref{eq:P-A-T-A}, 
\begin{equation}\label{eq:frege-conditions}
\rank T_A\le h_K/(2|\Pi|)\quad \text{conditioned on}\quad \Delta_\ell \le K \text{ and } K+2\Delta_\ell \le D.
\end{equation}
We examine these conditions below.  
\smallskip 

\emph{At depth $i<\ell$.} We maintain $\Delta_i=O_{p,\ell}(L^i)$ inductively over $i$. At step $i$, we apply Fact~\ref{fact:binomial} using $a= \lfloor \frac n2-K\rfloor$, $b= \Delta_{i-1}$. 
Since $\frac{\Delta_{i-1}\Big( (\frac n2 - K) +\Delta_{i-1}+\sqrt n\Big)}n = O_{p,\ell}\left(\frac 1L + \frac{B L^{\ell-2} + L^{2(\ell-2)}}n + \frac{L^{\ell - 2}}{\sqrt n}\right)$ where $\lfloor\frac n2 - K \rfloor = B + \lceil\frac{n}{L^{\ell-1}}\rceil$, it suffices to impose:
\begin{align}
\frac{n}{L^{\ell -1}}+ L^{\ell -1} = & \ o\big(n\big), \tag{Constraint 1}\label{eq:constraint-1}\\
\frac{B L^{\ell-2} + L^{2(\ell-2)}}n + \frac{L^{\ell - 2}}{\sqrt n} = & \ O(1). \hspace{30pt} \tag{Constraint 2}\label{eq:constraint-2}
\end{align}
Indeed, if so, then $\Delta_{i-1} = O_{p,\ell}(L^{i-1})$ and the condition in Fact~\ref{fact:binomial} holds, whose application gives $\eta_K(\Delta_{i-1})=\Omega_{p,\ell}(1)$, thus $t_i=O_{p,\ell}(L)$ by \eqref{eq:choice-t-i}, and so $\Delta_i=O_{p,\ell}(L^i)$, closing the induction. 

\emph{At depth $\ell$.} Fact~\ref{fact:binomial} gives $\eta_K(\Delta_{\ell-1}) = \exp\!\left[- O_{p,\ell}\!\left(1+\frac{B L^{\ell-1} + L^{2(\ell-1)}}n + \frac{L^{\ell - 1}}{\sqrt n}\right)\right]$, so \eqref{eq:choice-t-i} gives
\begin{equation}\label{eq:Delta-ell}
 \Delta_\ell =  (p-1)\Delta_{\ell-1}t_\ell \le L^\ell \exp\!\left( O_{p,\ell}\!\left(1+\frac{B L^{\ell-1} + L^{2(\ell-1)}}n +\frac{L^{\ell - 1}}{\sqrt n}\right) \right).
\end{equation}
These approximators are not fed into other OR gates, so we don't need to inspect $\eta_K(\Delta_\ell)$. 
We only need to ensure the two conditions in \eqref{eq:frege-conditions}; for the purpose of simulation later, we strengthen the second condition to  
\begin{equation}\label{eq:sim-reserve}
K+w(F)+O_{p,\ell}(\Delta_\ell)\le D.
\end{equation} 
So we impose $\Delta_\ell = o\left( \frac n{L^{\ell -1}} \right)$, or
\begin{align}
\hspace{16pt}L^\ell \exp\left[ O_{p,\ell}\left(\frac{BL^{\ell -1} + L^{2(\ell - 1)}}{n}\right)\right] = o\left(\frac{n}{L^{\ell - 1}}\right). \tag{Constraint 3}\label{eq:constraint-3}
\end{align}
Indeed, the first condition in $\eqref{eq:frege-conditions}$ also follows from \ref{eq:constraint-1}, \ref{eq:constraint-3}, and the assumption that $D-w(F)=\Omega(n)$: $K-\Delta_\ell = (D-w(F)) - \lceil\frac{n}{L^{\ell-1}}\rceil - \Delta_\ell = \Omega(n) - o(n) \ge 0$. 

Constraints 1, 2 and 3 are all we need. Since $L\to\infty$, they hold if
\[
\frac{BL^{\ell-2}}n=O(1) \ \ \text{and} \ \ \frac {L^{2\ell-1}}n \exp\left(C\frac{BL^{\ell-1}}n\right)=o(1),\ \ \text{where $C=C_{p,\ell}$ is pre-given but constant.}
\]
Condition \eqref{eq:frege-trial} is one way to guarantee both. 

Therefore, assuming \eqref{eq:frege-trial}, the conditions in \eqref{eq:frege-conditions} hold throughout. Then the sum of $\rank T_A$ over distinct OR gates in $\Pi$ is at most $h_K/2<h_K$, so there exists $f\in\V_K\setminus\C_D$ such that
\begin{equation}\label{eq:or-polys}
 fP_A(1-P_{A_j})\in\C_D
 \qquad\text{for every OR gate $A=\bigvee_j A_j$ and every $j$.}
\end{equation} 
This completes the construction of the approximators $P_A$ and multiplier $f$. 

We record an identity for OR gates by a telescoping of the definition \eqref{eq:PA-appendix}, similar to \eqref{eq:prefix}:
\begin{equation}\label{eq:or-telescoping}
 1-P_A=\sum_j U_{A,j}(1-P_{A_j}),\quad\text{for some $U_{A,j}$ of degree} \le\Delta_\ell.
\end{equation}

\textbf{3. Syntactic simulation.}
We give a uniform treatment of the simulation of Frege axioms and rules via the following lemma; cf. \cite[Section 6]{BIKPRS96} for a similar underlying principle.

The idea is to treat the leaves of an axiom schema as formal variables, then simulate any predicate over the occurring gates. 

Formally, let a \emph{template} $T$ be a finite forest of formulas whose leaves are formal names. 
We associate a variable $z_v$ to each node, with repeated leaf names sharing a variable and all other node variables distinct. 
The equation $z_v=0$ is intended to represent the node value being TRUE. 
For each OR gate $v=\bigvee_i v_i$, we also introduce fresh variables $u_{v,i}$, which needn't be Boolean. 
Let $\mathsf G(T)$ consist of the Booleanity polynomials $z_v^2-z_v$ ($\forall v\in T$) and the following gate polynomials:
\begin{align}
 z_v+z_w-1 &\qquad(v=\neg w) \label{eq:T-neg}\\
 z_v-\left(\sum_i(1-z_{v_i})-a\right)^{p-1} &\qquad(v=\operatorname{MOD}_{p,a}(v_1,\ldots,v_s))\label{eq:T-mod}\qquad\\
 z_v(1-z_{v_i}) &\qquad(v=\bigvee_{i=1}^s v_i,\ \ i\in[s]) \label{eq:T-or-err}\\
 1-z_v-\sum_i u_{v,i}(1-z_{v_i}) &\qquad(v=\bigvee_{i=1}^s v_i) \label{eq:T-or-telescoping}
\end{align}

\begin{lemma}[Gate semantics]\label{lem:gate-sim}
Let template $T$ contain $t_O$ OR gates and $t_M$ modular gates, with arbitrary fan-in. If $R\in\Fp[(z_v)_{v\in T}]$ vanishes on every Boolean assignment that is consistent with the gate computations in $T$, then $\mathsf G(T)$ derives $R$ in PC degree
\begin{equation}\label{eq:T-deg}
  \deg R+3t_O+(p-1)^2t_M+O(1),
\end{equation}
regardless of the number of leaves and negations.
\end{lemma}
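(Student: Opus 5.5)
The plan is to eliminate gates one at a time, leaves-last (roots first), turning each gate variable into a function of its children modulo $\mathsf G(T)$. Once no gates remain, the polynomial $R$ becomes a polynomial in the leaf variables that vanishes on the whole Boolean cube. It is then derivable from the Booleanity axioms in degree $\deg R$ by Fact~\ref{fact:boolean}. The degree cost must be charged per gate, not per leaf or negation, which dictates the order and the form of the reductions.

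\emph{Negation and modular gates.} For $v=\neg w$, the polynomial \eqref{eq:T-neg} is linear. So substituting $z_v\mapsto 1-z_w$ in $R$ costs only $+1$ in degree: $R-R|_{z_v=1-z_w}$ is a multiple of $z_v+z_w-1$ with cofactor of degree $\le\deg R$. Negations therefore add nothing beyond $O(1)$, since we can do all of them at once in one such step. For a modular gate, \eqref{eq:T-mod} expresses $z_v$ as a polynomial of degree $p-1$ in the children. Substituting it would multiply degrees by $p-1$, so instead we first multilinearize in $z_v$ using $z_v^2-z_v$. $R$ is then affine in $z_v$, and replacing the single occurrence of $z_v$ raises the degree by at most $p-1$. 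The resulting expression in the children, of the form $(\sum_i(1-z_{v_i})-a)^{p-1}$, has degree $p-1$ in children that may themselves be gates. The cost $(p-1)^2$ per modular gate arises as follows: after the substitution we reduce modulo Booleanity, which is again affine per variable, with an extra factor $p-1$ of slack from raising $(\cdot)^{p-1}$, using $z^p\equiv z$.

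\emph{OR gates.} The gate variable is not determined polynomially by its children. Semantically, $z_v=0$ iff some $z_{v_i}=0$, given Booleanity. The relevant identities are \eqref{eq:T-or-err}, which gives $z_v(1-z_{v_i})=0$, and \eqref{eq:T-or-telescoping}, which gives $1-z_v\in\mathrm{span}$ of the $(1-z_{v_i})$ with coefficients $u_{v,i}$. After multilinearizing, write $R=R_0+z_vR_1$ with $R_0,R_1$ free of $z_v$. Consider assignments to the remaining variables that are consistent with the rest of $T$. When all children are false ($z_{v_i}=1$), $z_v=1$ is forced, so $R_0+R_1$ vanishes there. When some child is true, $z_v=0$ is forced, so $R_0$ vanishes there. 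Now decompose
\[R \;=\; z_v\,\textstyle\prod_i z_{v_i}\cdot(\dots)+\dots\]
and reduce using the products $z_v(1-z_{v_i})$ in telescoped form. Multiplying \eqref{eq:T-or-telescoping} by $R_0$ gives $R_0(1-z_v)\equiv\sum_i u_{v,i}R_0(1-z_{v_i})$. The terms $R_0(1-z_{v_i})$ are handled by induction on the smaller template with an extra factor of degree $1$. The $u$-variables appear only linearly, which is why they need not be Boolean and cost only a constant. Altogether this adds about $3$ to the degree per OR gate: one for $z_v$, one for the factor $1-z_{v_i}$, and one for $u_{v,i}$.

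\emph{Main obstacle.} The hardest point is keeping the OR-gate step independent of the fan-in $s$. Naive telescoping over $\prod_i z_{v_i}$ would cost degree $s$. My plan is to never form $\prod_i z_{v_i}$ explicitly. I would handle it by an induction on the template whose invariant is ``$R$ times a single linear-in-gate factor'', pushing the sum over $i$ outside as a linear combination. Linear combinations are free in PC, so the fan-in does not affect the degree. I expect checking that this induction closes with the stated constants $3t_O+(p-1)^2t_M+O(1)$ to be the main technical work.
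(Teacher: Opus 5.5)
\textbf{There is a genuine gap, in the OR-gate step.} Your negation step is sound. So is the branch $R_0(1-z_v)$: multiplying \eqref{eq:T-or-telescoping} by $R_0$ works because each $R_0(1-z_{v_i})$ vanishes on gate-consistent assignments of $T\setminus\{v\}$ (extend by $z_v=0$), which gives a fan-in-independent recursion.

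\emph{The missing branch.} The complementary branch $z_v(R_0+R_1)$, where the OR gate is FALSE and all $s$ children are forced FALSE, is never actually handled. Your sketch for it (decomposing along $z_v\prod_i z_{v_i}$, plus an unspecified ``single linear-in-gate factor'' invariant) is not a derivation. Moreover, the endpoint you announce is unreachable: eliminating all gates to leave a leaf polynomial vanishing on the whole cube fails because an OR gate is a degree-$s$ function of its children.

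\emph{What the cheap step gives, and what it leaves.} Using \eqref{eq:T-or-err} one child at a time, you can replace $z_vQ$ by $z_v\,Q|_{z_{v_i}=1\ \forall i}$ in degree $\deg Q+1$, independent of $s$. But the restricted polynomial is only guaranteed to vanish on gate-consistent assignments of $T\setminus\{v\}$ in which those children are FALSE. If a child is an OR gate, this forcing propagates further down. If a child is a MOD gate, it becomes the degree-$(p-1)$ constraint $(\sum_j(1-z_{w_j})-a)^{p-1}=1$ on its children, which no substitution removes. So your induction has to carry forced node values and polynomial side constraints. Its last step is then a tautology relative to those constraints, not over the whole cube.

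\emph{Where $(p-1)^2$ comes from.} It is not substitution plus multilinearization. Substituting \eqref{eq:T-mod} into a polynomial affine in $z_v$ costs at most $p-2$ in degree.

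\textbf{How the paper supplies these pieces.} It does so globally rather than gate by gate.
\begin{enumerate}[label=(\roman*)]
\item It conditions on all OR-gate values at once via $\chi_\alpha$, with $\sum_\alpha\chi_\alpha=1$.
\item For the TRUE OR gates it multiplies the telescoping identities to derive $\chi_\alpha-\sum_\sigma\chi_\alpha U_\sigma W_\sigma$. This picks a TRUE witness child $\sigma(v)$ for each such gate.
\item Under $\chi_\alpha W_\sigma$ it derives each forced value separately; FALSE OR gates force each child FALSE through \eqref{eq:T-or-err}. This yields a partial assignment $\rho$, and hence $\chi_\alpha W_\sigma(R-R|_\rho)$ and $\chi_\alpha W_\sigma\,h_j|_\rho$ for the MOD polynomials $h_j$.
\item It closes with the Boolean-cube tautology $R|_\rho\prod_j\bigl(1-(h_j|_\rho)^{p-1}\bigr)$, derivable in degree $\deg R+(p-1)^2t_M+1$.
\end{enumerate}
That indicator product is the actual source of the $(p-1)^2t_M$ term. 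The bounds $\deg\chi_\alpha,\deg U_\sigma,\deg W_\sigma\le t_O$ account for $3t_O$.

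\textbf{How to repair your approach.} To keep your gate-by-gate recursion, generalize the induction hypothesis to templates with forced node values and extra constraints of degree $\le p-1$, and finish with such an indicator product. As written, the proposal defers exactly the step that carries the content of the lemma.
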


\begin{proof}
Think of the $z$-variables as arranged on $T$, then recursively bottom-up, we clear all appearing negation-gate variables by substituting $z_{\neg v}=1-z_v$. 
So a chain of negations is affine in its first non-negation descendant. Using \eqref{eq:T-neg}, PC can derive from $\mathsf G(T)$ its substituted version, in degree $O_p(1)$. 
Below, by $\mathsf G(T)$, we refer to this substituted version, and the same for $R$.

The plan is to derive $R$ conditioned on each complete assignment $\alpha$ to the OR gates, then add them. 

For each assignment $\alpha\in\{0,1\}^{t_O}$ to the OR gates in $T$, let 
\begin{equation}\label{eq:chi-alpha}
\chi_\alpha= \prod_{\alpha_v=1}z_v \prod_{\alpha_v=0}(1-z_v).
\end{equation}
Then $\sum_\alpha\chi_\alpha=1$. For each $\alpha$, let $I_\alpha$ be the set of OR gates $v$ with $\alpha_v=0$. 
For $v\in I_\alpha$, from \eqref{eq:T-or-telescoping} and Booleanity, we have
\begin{equation}\label{eq:T-telescoping}
\chi_\alpha=\chi_\alpha\sum_{i} u_{v,i}(1-z_{v_i})
\end{equation}
By multiplying these identities over all $v\in I_\alpha$, then expanding modulo Booleanity, PC derives
\begin{equation}\label{eq:T-expand}
\chi_\alpha - \sum_{\sigma} \chi_\alpha U_\sigma W_\sigma 
\end{equation}
where the sum is over all functions $\sigma$ that choose one child $\sigma(v)$ per $v\in I_\alpha$, and
\begin{equation}\label{eq:T-UW}
U_\sigma=\prod_{v\in I_\alpha}u_{v,\sigma(v)}, \qquad W_\sigma=\prod_{v\in I_\alpha}(1-z_{v_{\sigma(v)}}).
\end{equation}
This derivation has degree at most $3t_O+O(1)$.
Below, for each $\sigma$, we derive  
\begin{equation}\label{eq:T-goal}
\chi_\alpha W_\sigma R
\end{equation} 
in degree $\deg R+2t_O+t_M (p-1)^2+O(1)$. If so, multiplying \eqref{eq:T-goal} by $U_\sigma$ and adding with the $R$-multiple of \eqref{eq:T-expand}, we get $\chi_\alpha R$; then summing them we get $R$, overall in degree
\begin{equation}\label{eq:T-degree}
\deg R + 3 t_O + t_M (p-1)^2 +O(1).
\end{equation}

\emph{Derivation of \eqref{eq:T-goal}.} 
Semantically, $\chi_\alpha W_\sigma=1$ forces every OR gate $v$ to have value $\alpha_v$ and each $\sigma(v)$ to be TRUE. Syntactically, we can derive $\chi_\alpha (z_v - \alpha_v)$ for each OR gate $v$ and derive $W_\sigma z_{v_{\sigma(v)}}$ for each $v\in I_\alpha$, using only Booleanity polynomials.
At other OR gates $v\notin I_\alpha$, $\chi_\alpha=1$ forces every child $v_i$ to be FALSE, i.e., we can derive $\chi_\alpha (1-z_{v_i})$ as a multiple of \eqref{eq:T-or-err} for each children $v_i$. 

If these forced values are inconsistent at a gate, then PC derives $\chi_\alpha W_\sigma$ from the two derived polynomials at that gate. Otherwise, the forced values define a partial assignment $\rho$ on $(z_v)_{v\in T}$. This means for every
variable fixed by $\rho$, PC derives
\begin{equation}\label{eq:rho}
\chi_\alpha W_\sigma(z_j-\rho(z_j))
\end{equation}
within degree $\deg (\chi_\alpha W_\sigma)+O(1)$.
Consequently, for any polynomial $h$, PC derives
\begin{equation}\label{eq:rho-restriction}
\chi_\alpha W_\sigma(h-h|_\rho)
\end{equation}
within degree $\deg (\chi_\alpha W_\sigma)+\deg h+O(1)$. In particular, PC derives, using $\mathsf G(T)$:
\begin{itemize}
\item $\chi_\alpha W_\sigma(R-R|_\rho)$;
\item $\chi_\alpha W_\sigma (h_j|_\rho)$, where $h_1,\cdots,h_{t_M}$ are the MOD-gate polynomials \eqref{eq:T-mod} in $\mathsf G(T)$ all having degree at most $p-1$.
\end{itemize}
These derivations are in degree $\deg (\chi_\alpha W_\sigma)+\max\{\deg R,p-1\}+O(1)$.

Every Boolean assignment to the remaining $\{z_v\}$-variables that makes $h_1|_\rho=\cdots=h_t|_\rho=0$ extends, via $\rho$, to a complete $\{z_v\}$-assignment that respects all gate computations. Hence $R|_\rho$ vanishes on these assignments. It follows that
\[
R|_\rho\prod_{j=1}^{t_M}\left(1 - (h_j|_\rho)^{p-1}\right)
\]
vanishes on the entire Boolean cube. By multilinear reduction, this polynomial is derivable from the Boolean axioms in degree at most
$\deg R+t_M (p-1)^2+1$. 
Multiplying this by $\chi_\alpha W_\sigma$ and using the already derived $\chi_\alpha W_\sigma (h_j|_\rho)$, we obtain $\chi_\alpha W_\sigma R|_\rho$, and hence $\chi_\alpha W_\sigma R$, in degree 
\begin{equation*}
2t_O+\deg R+t_M(p-1)^2+O(1). \qedhere
\end{equation*}
\end{proof}

We now simulate $\Pi$. The initial clauses in $F$ are handled by~\eqref{eq:syntactic-initial}, in degree $K+w(F)+O_p(\Delta_\ell)$.

To simulate the logical axioms and inferences in $\Pi$, we apply Lemma~\ref{lem:gate-sim}. 

Each axiom schema is a template with constantly many OR and MOD gates (even if its fan-ins are unbounded). 
When used in $\Pi$, it is instantiated as an initial formula $A$ whose leaf and gate approximators have all been constructed; note that identical leaf names are instantiated by the same formula in $\mathcal S_\Pi$, so they share an approximator by the construction.  
By applying the lemma with $R=z_{\text{A}}$, then substituting $z_v=P_v$ and $u_{v,i}=U_{v,i}$ from \eqref{eq:or-telescoping} in the resulting derivation, and finally multiplying every step by $f$, we get a derivation of $fP_A$ in degree $K+O_p(\Delta_\ell)$. The substituted Boolean axioms follow from the original Boolean axioms, the negation and modular relations hold exactly by the definitions of $P_A$, and the two OR relations follow from \eqref{eq:or-polys} and \eqref{eq:or-telescoping}. Therefore, this shows $fP_A\in\C_D$.

For an inference $\frac{A_1,\ldots,A_s}{B}$ ($s\le 2$), take the joint template and let $R=z_B\prod_{i=1}^s(1-z_{A_i})$.
It vanishes on every gate-consistent assignment by soundness of the rule. The template has constantly many OR and MOD gates; in particular, merging/unmerging uses only 3 OR gates. Applying the lemma thus gives $fP_B\prod_{i=1}^s(1-P_{A_i})\in\C_D$. By the inductive hypothesis, $fP_{A_i}\in\C_D$, so we obtain $fP_B\in\C_D$ using the identity
\[ fP_B = fP_B\prod_{i=1}^s(1-P_{A_i}) +\sum_{i=1}^s(fP_{A_i})P_B \prod_{j<i}(1-P_{A_j}),\]
again in degree $K+O_p(\Delta_\ell)$.

Thus, by \eqref{eq:sim-reserve}, every proof line $A$ satisfies $fP_A\in\C_D$. The final empty disjunction has $P_A=1$, so $f\in\C_D$, contradicting the choice of $f$.
\end{proof}

\end{document}